\newtheorem{theorem}{Theorem}
\newtheorem{prop}[theorem]{Proposition}
\newtheorem{proof}{Proof}
\def\tT{{\mbox{\tiny{T}}}}
\newcommand{\bitem}{\begin{itemize}}
\newcommand{\eitem}{\end{itemize}}
\newcommand{\bpm}{\begin{pmatrix}}
\newcommand{\epm}{\end{pmatrix}}
\newcommand{\bq}{\begin{equation}}
\newcommand{\eq}{\end{equation}}
\let\abs=\envert
\let\norm=\enVert
\begin{document}
%

\title{Variational Community Partition with Novel Network Structure Centrality Prior}
%
%
%

\author{Yiguang Bai,
        Sanyang Liu,
        Ke Yin,
        Jing Yuan
\thanks{Yiguang Bai, Sanyang Liu and Jing Yuan are with the School of Mathematics and Statistics, Xidian University, Xi'an, Shaanxi, China}
\thanks{Ke Yin is with the Center of Mathematical Sciences, Huazhong University of Science and Technology, Wuhan, Hubei, China}
}

%
%

\markboth{submitted }%
{Shell \MakeLowercase{\textit{et al.}}:  Expansion and Shrinking Benchmark Strategy for Graph Cuts in Community Detection}
%



\maketitle

\begin{abstract}
In this paper, we proposed a novel two-stage optimization method for network community partition, which is based on inherent network structure information. The introduced optimization approach utilizes the new network centrality measure of both links and vertices to construct the key affinity description of the given network, where the direct similarities between graph nodes or nodal features are not available to obtain the classical affinity matrix. Indeed, such calculated network centrality information presents the essential structure of network, hence, the proper measure for detecting network communities, which also introduces a `confidence' criterion for referencing new labeled benchmark nodes.
For the resulted challenging combinatorial optimization problem of graph clustering, the proposed optimization method iteratively employs an efficient convex optimization algorithm which is developed based under a new variational perspective of primal and dual. Experiments over both artificial  and real-world network datasets demonstrate that the proposed optimization strategy of community detection significantly improves result accuracy and outperforms the state-of-the-art algorithms in terms of accuracy and reliability.

\end{abstract}

\begin{IEEEkeywords}
Semi-supervised learning, Two-stage strategy, Community detection, Potential nodes, Benchmark expansion.
\end{IEEEkeywords}

\IEEEpeerreviewmaketitle

\section{Introduction}
\label{S:1}

Modern network science~\cite{Guanrong2014Network,Watts1998Collectivedynamics,Barabasi1999Albert,mej2010networks,boccaletti2006complex} has brought
crucial and significant improvements to our understanding of complex system~\cite{mitchell2006complex}. One of the most prominent features for representing real complex network systems is the structure of communities~\cite{goldenberg2001talk,palla2005uncovering}, i.e.
the organization of vertices, for which vertices in the same community have more connections than ones in the other  communities~\cite{fortunato2010community}.
In a complex system, each community is often composed of multiple entities with similar properties, which provides a deep insight into the structure and function of the whole network system~\cite{yang2015unified,rosvall2007information}. Therefore, detecting communities is of great importance for many different applications of biology,
physics, sociology and computer science, where the system is usually modeled as a complicated network with edges linking each related node pairs  \cite{newman2004detecting}.

With this respect, many different methods were proposed to solve the challenging problem of partitioning independent communities from the given network: one of the most popular methods, proposed by Girvan and Newman \cite{girvan2002community}, is to identify network communities through maximizing the modularity $Q$ of the associate network, which has been the essential criterion of many community detection methods. However,
optimizing a network's modularity is mathematically nontrivial. Recent studies demonstrated that exactly maximizing the modularity of network is NP-complete, for which many polynomial-time approximation methods have been proposed, such as the greedy method \cite{newman2004detecting},
simulated annealing \cite{guimera2004modularity}, extremal optimization \cite{duch2005community}, intelligent
optimization \cite{liu2014multiobjective,ma2014multi}, game theory-based methods \cite{Bu2017Dynamic} and spectral methods \cite{ali2017improved}.
Another way to deal with such a network partition problem is to construct the affinity matrix of the associate network graph, where the affinity matrix encodes the seminal information about network structure and connectivities, and compute the optimized nodal labels over the given graph which directly separate the network graph into different partitions or communities. According to the proposed optimization criterion of labeling estimation, many different approaches were proposed to recover such labels of graph nodes, for example, label propagation and random walks over graphs \cite{Zhu03semi-supervisedlearning,Azran07therendezvous}, spectral clustering \cite{Chung1996Spectral,Luxburg2007A} based balanced graph cut approaches including ratio cut \cite{Hagen1991Fast}, normalized cut \cite{Shi2000Normalized}, p-Laplacian based graph cut \cite{buhler2009spectral} and cheeger cut \cite{hein2011beyond} etc., however, most of them are computationally expensive and inefficiency due to their posed non-convex optimization formulations. In contrast, min-cut-based methods \cite{Blum01learningfrom,Boykov01fastapproximate} can solve
the studied combinatorial optimization problem efficiently in an approximate way along with the capability to handle large-scale graphs. Recently, convex optimization was developed as a popular optimization framework to build up fast solvers for labeling recovery in the spatially continuous
setting \cite{MR2177726,bresson2014multi,Yuan2010A,yuan2010continuous} etc.: the main ideas of related convex optimization approaches, i.e. relax the binary label constraints to a continuous convex set and round the result of the reduced convex optimization problem back to binary, can be directly
extended to discover the optimum labeling of each graph node, for example Yin et al's total-variation-based region force method (TVRF) to semi-supervised clustering \cite{yin2018effective}, which introduced a fast splitting optimization framework to the proposed convex optimization problem and outperformed most state-of-the-art graph partitioning algorithms.

Besides the matter of developing efficient optimization algorithms, another major challenges of detecting communities from some network are in two folds: First, it lacks reliable descriptions to encode inherent network structures and affiliation of any node to the specified community. One common way is to sample some graph nodes into some communities beforehand, so called benchmark nodes, which can partially solve this difficulty: benchmark nodes reveal limited network structure information which can still indicate the recovery of other nodes belonging to the same community through evaluating the given network's coreness, betweenness etc.; in addition, benchmark nodes provide meaningful starting points to propagate label along graph links to the other nodes, i.e. label inference, which can be incorporated into the often-used optimization procedures. Second, it could be hard to get enough benchmark nodes in real-world applications; especially for the large sparse networks, low degree benchmark nodes can hardly provide useful information about network communities and worsen performance of the followed partition procedure. Therefore, how to discover more 'trustable' benchmark nodes during the whole computing process becomes one key factor of extracting partitions or communities accurately from the given network.

\subsection{Contributions and Organization}
Motivated by previous studies, we present a novel two-stage optimization method for network community partition,
based on the internal structure measure of the given network graph. The proposed optimization approach utilizes a new network centrality measure of both links and vertices to construct the key affinity matrix of the given network correctly, for the cases vanishing node similarities
which commonly happen for network community detection.
The network centrality information actually reveals the essential structure of network, which, hence, provides a proper clue for detecting network communities and introduces an additional `confidence' criterion for labelings by referencing the labeled benchmark nodes.
Particularly, the two-stage optimization method makes use of the network centrality-based `confidence' measure for the stage of benchmark node refinement, and an efficient convex optimization algorithm to the solve the followed challenging combinatorial optimization problem of graph clustering, which is developed based under a new variational perspective of primal and dual. Refining benchmark nodes can effectively improve the accuracy and reliability of the proposed optimization approach. Experiments over both artificial and real-world network datasets demonstrate that the proposed optimization method of community detection outperforms the state-of-art algorithms in terms of accuracy and reliability.

\subsection{Definitions and Notations}
Let $\mathbb{C}:=\{ {C_1},{C_2}, \cdots {C_K}\}$ be the set of $K$ communities, which is represented by the graph $G:=(V,E)$
with $\abs{V} = n$ vertices (or nodes) and
$\left| E \right| = e$ edges (or links);  each edge $e_{ij}$, where $i,j \in \{1 \ldots n\}$, denotes the existing link between two nodes $v_i$ and $v_j$,
and each community $C_k=(V_k,E_k)$, $k=1 \ldots K$, is a
distinct subgraph of $G$.
The connectivity of $G$ can be expressed as its adjacency matrix $A = (a_{i j})$
whose $(i,j)$-entry $a_{i j} = 1$ means there exists a link between the two nodes $v_i$ and $v_j$, and $a_{i j}=0$ otherwise.
The matrix $W = (w_{i j})$ represents the affinity matrix of graph $G$, where $w_{{i j}}$ measures the similarity between the two vertices
of $v_i$ and $v_j$,
and is usually given as a symmetric matrix with non-negative entries. Additionally, the diagonal matrix $D=(d_{ii})$ is given by
$d_{ii} = \sum_{j=1}^n w_{ij}$, $i=1 ... n$.

With this, the linear operators of gradient and divergence over the graph $G$ are introduced as follows~\cite{Zhou2005Regularization}:
for some scalar function $u(v_i)$ given at each node $v_i$, its gradient  $\nabla_{e_{ij}} u$ evaluates
the difference of $u(\cdot)$ between two nodes $v_i$ and $v_j$ along the link $e_{ij}$ such that

\begin{equation}
  \nabla_{e_{ij}} u \, = \, w_{ij}(u({v_j}) - u({v_i}))_{{v_j} \in \mathcal{N}({v_i})} \, ,
  \label{func_gradian}
\end{equation}
whose $L_p$-norm, $p>= 1$, is measured as
\bq  \label{eq:grad-norm}
\norm{\nabla u}_p \, = \, \sum\limits_{e_{i j} \in E} w_{ij}\, \abs{u({v_j}) - u({v_i})}^p \, ;
\eq

for some function $f(e_{ij})$ given at each edge $e_{ij} \in E$, its divergence $\mathrm{div}(f)_i$ at the node $v_i$ measures the balance of
$f$ over all the edges associate linking the neighbour nodes around $v_i$, i.e. $\mathcal{N}(v_i)$ such that
\begin{equation}\label{eq:div}
\mathrm{div}(f)_i\, = \, \sum_{{v_j} \in \mathcal{N}(v_i)} f(e_{ij})\, .
\end{equation}

\section{Semi-Supervised Graph Partition and Convex Optimization Model}
\label{S:2}

In this work, we aim to partition communities from a given graph network with a new two-stage optimization method. The proposed optimization approach utilizes a new network centrality measure of both links and vertices to construct the key affinity matrix of the given network correctly, for the cases vanishing node similarities
which commonly happen for network community detection.
The network centrality information actually reveals the essential structure of network, which, hence, provides a proper clue for detecting network communities and introduces an additional `confidence' criterion for labelings by referencing the labeled benchmark nodes.
Particularly, the two-stage optimization method makes use of the network centrality-based `confidence' measure for the stage of benchmark node refinement, and an efficient convex optimization algorithm to the solve the followed challenging combinatorial optimization problem of graph clustering, which is developed based under a new variational perspective of primal and dual. Refining benchmark nodes can effectively improve the accuracy and reliability of the proposed optimization approach.

\subsection{Semi-Supervised Graph Partitioning}

Graph partitioning targets to cut the given graph $G$ into multiple independent subgraphs (or communities). Let ${\bf \Psi}  = (\psi _{i k})$ be a binary matrix, where $\psi_{i k}=\{0,1\}$ denotes the node $v_i$ belongs to the community $C_k$  $(\psi_{i k}=1)$ or not $(\psi_{i k}=0)$. Then, graph partitioning tries to minimize the following energy function:
\begin{equation}
 E = \sum_{k = 1}^K \sum_{e_{i j} \in E} w_{ij}\abs{\psi _{ik} - \psi _{jk}} \, , \quad i=1\, \ldots \, n\, , \; k=1\, \ldots \, K \, .  \label{func_1}
\end{equation}
Also, the convex penalty function of each term in \eqref{func_1} can also be the quadratic function $\abs{\cdot}^2$ or $\ell_p$-norm function $\abs{\cdot}^p$, $p>1$, which results in the weighted Laplacian or $p$-Laplacian as the energy function of \eqref{func_1} \cite{Luxburg2007A,buhler2009spectral}.

In addition, each vertex belongs to only one subgraph/community, i.e.
\begin{equation} \label{eq:const1}
\sum_{k=1}^K \, \psi_{ik} \, = \, 1\, , \quad i = 1 \, \ldots \, n \, .
\end{equation}

Using the definitions of graph gradients \eqref{func_gradian} and \eqref{eq:grad-norm}, the optimization problem \eqref{func_1} can be written in a more concise form
of minimizing the corresponding graph total-variation function such that
\begin{equation}
\min_{\psi_{i k} \in \{0,1\}} \; \sum\limits_{k = 1}^K \, \norm{\nabla \Psi_k} _1 \,, \quad \text{s.t.  \eqref{eq:const1}} \, ;
\label{func_3}
\end{equation}
where $\Psi_k=(\psi_{1,k}, \ldots, \psi_{n k})^\tT$ denotes the $k$-th column of ${\bf \Psi}$.

It is clear that the optimization model \eqref{eq:potts2} has a trivial solution, where all vertices belong to the same community.
One important way to avoid this situation is to integrate priori information into the proposed optimization problem \eqref{func_1}, for example, some benchmark nodes are labeled, hence separating such partially labeled graph into multiple independent subgraphs/communities introduces a proper \emph{semi-supervised graph partition problem} \cite{Book2009Zhu,yin2018effective}. Indeed,
the pre-labeled nodes helps improving the partition results for the given graph mainly in two folds: first, the labeled nodes provide the starting positions to propagate labels to the other vertices \cite{Book2009Zhu}; second, they also
reveal the essential features to construct graph partitioning hints in geometry or other respects (see the following section).

Let $S_k \subset C_k$, $k=1 \ldots K$, be the benchmark set which represents a sample fraction of the community $k$, and the total benchmark set $S =  \cup _{k = 1}^K{S_k}$. To this end, we have
\bq
\forall i \in S_k\, , \quad \psi_{i j} \, = \, \left\{ \begin{array}{ll}1 \, & \text{if  } j =  k \\ 0 \, & \text{if } j \neq k \end{array} \right. \, ,
\quad k\, =\, 1 \ldots K \, .
\eq

With the locations of pre-labeled nodes, we define the novel measure $p_{ik}$, $i=1 ... n$ and $k=1 ... K$, which characterizes the probability of each vertex $v_i$ belonging to community $C_k$ such that
\begin{equation}
p_{ik} \, = \, \frac{{\frac{1}{{\left| {{S_k}} \right|}}\sum\limits_{j \in {S_k}} {{q_{ij}}} }}{{\sum\limits_{r = 1}^K {\frac{1}{{\left| {{S_r}} \right|}}\sum\limits_{j \in {S_r}} {{q_{ij}}} } }} \, , \quad
\text{where}\;\;
{q_{ij}} = \frac{{{{({{\widehat w}_{ij}})}^2}}}{{{{\widehat w}_{ii}}{{\widehat w}_{jj}}}}
\label{similarityD}
\end{equation}
and  the matrix $({\widehat w_{ij}}) = {D^{ - 1/2}}W{D^{ - 1/2}}$ is the corresponding normalized affinity matrix; when the denominator is zero, set ${p_{ik}} = \frac{1}{K}$.
Therefore, we can integrate the cross-entropy information between the possibility $p_{ik}$ and the label function $\psi_{ik}$ into the optimization model \eqref{func_3} which gives rise to the following optimization problem:
\begin{equation}
\min_{\psi _{ik} \in \{0,1\}} (1 - \tau ) \sum_{k = 1}^K \, \norm{\nabla \Psi_k} _1\, + \, \tau \sum_{k = 1}^K \, \sum_{i = 1}^n\, \Big(- \log(p_{ik})(1 - \psi _{ik}) \, - \,\log(1 - p_{ik} )\psi_{ik}\Big)
\\label{fi_function}
\end{equation}
subject to the constraint \eqref{eq:const1}.

\subsection{Convex Relaxation and Dual Optimization}

Finding the optimum $\mathbf{\Psi}$ to the proposed minimization problem \eqref{fi_function} over the binary constraint $\psi_{ik} \in \{0,1\}$ is challenging, actually NP hard which means there is no efficient polynomial-time algorithm for such combinatorial optimization problem \eqref{fi_function}. In practice, we often replace the binary constraint $\psi_{i k} \in \{0, 1\}$ by its convex relaxation $\psi_{i k} \in [0,1]$ instead; hence, we have
\begin{equation}
\min_{\psi _{ik} \in [0,1]} (1 - \tau ) \sum_{k = 1}^K \, \norm{\nabla \Psi_k} _1\, + \, \tau \sum_{k = 1}^K \, \sum_{i = 1}^n\, \Big(- \log(p_{ik})(1 - \psi _{ik}) \, - \,\log(1 - p_{ik} )\psi_{ik}\Big) \,, \quad \text{s.t.  \eqref{eq:const1}} \, .
\label{eq:potts}
\end{equation}

On the other hand, combine the two constraints $\psi_{i k} \in [0,1]$ and \eqref{eq:const1}, i.e.
\bq \label{eq:simp}
\psi_{i k} \, \geq \, 0 \, , \quad \; \sum_{k=1}^K \, \psi_{ik} \, = \, 1\, , \quad i = 1 \, \ldots \, n \, ,
\eq
which denotes that, for each node $v_i$, the $i$-th row $\Psi^i= (\psi_{i 1} , \ldots , \psi_{i K})$ of the matrix ${\bf \Psi}$ belongs to
the $K$-dim simplex set $T_K$.

In this sense, we can rewrite the convex optimization problem \eqref{eq:potts}, also in view of \eqref{func_1}, as
\bq \label{eq:potts2}
\min_{\psi} (1 - \tau ) \sum_{k = 1}^K \sum_{e_{i j} \in E} w_{ij}\abs{\psi _{ik} - \psi _{jk}} \, + \, \tau \sum_{k = 1}^K \, \sum_{i = 1}^n\, M_{i k} \psi _{ik} \,,
\eq
where $M_{i k} = \log (p_{ik} / (1 - p_{ik}))$, subject to
\[
\Psi^i \, \in \, T_K \, , \; i = 1 \ldots n\, .
\]

Through variational analysis, see appendix \ref{sec:app1} for details, we can prove the equivalence between the convex optimization problem \eqref{eq:potts2} and its associate dual model \eqref{eq:app-dual} such that
\begin{prop} \label{prop1}
The convex optimization problem \eqref{eq:potts} is mathematically equal to the following maximization problem
\bq \label{eq:dual}
\max_{q, r}\; \sum_{i=1}^n r_i^s \, ,
\eq
subject to
\bq
\mathrm{div}(q^k)_i \, - \, r_i^s \, + \, r_i^k \, = \, 0 \,, \;\;
\abs{q^k(e_{ij})} \, \leq \, w_{ij} \, , \;\;  r_i^k \, \leq \, M_{ik}\, , \quad i \, = \, 1 \ldots n\, , \;\; k \, = \, 1 \ldots K \, .
\eq

In addition, the optimum $\psi_{ik}$, $i=1 ... n$ and $k=1 ... K$, to the original convex optimization problem \eqref{eq:potts2} are just the optimal multipliers to the above linear equality constraints.
\end{prop}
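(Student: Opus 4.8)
The plan is to exhibit \eqref{eq:potts2} and the dual \eqref{eq:dual} as the two faces of a single primal--dual saddle-point problem, and to connect them by a minimax interchange. Throughout I absorb the scalars $(1-\tau)$ and $\tau$ into the weights $w_{ij}$ and the data coefficients $M_{ik}$, so that (after discarding the $\psi$-independent constant $-\sum_{k,i}\log p_{ik}$) the objective of \eqref{eq:potts} coincides with that of \eqref{eq:potts2}, and the simplex constraint \eqref{eq:simp} is in force for every row $\Psi^i$.

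First I would dualize the total-variation term. Using $\abs{t}=\max_{\abs{s}\le 1}st$ edgewise together with a graph integration-by-parts identity, i.e. the adjointness of the gradient \eqref{func_gradian} and the divergence \eqref{eq:div}, I would write, for each label $k$,
\[
\sum_{e_{ij}\in E} w_{ij}\abs{\psi_{ik}-\psi_{jk}} \,=\, \max_{\abs{q^k(e_{ij})}\le w_{ij}}\; \sum_{i=1}^n \psi_{ik}\,\mathrm{div}(q^k)_i,
\]
where the flow $q^k$ is taken antisymmetric on oriented edges so that $\sum_i\psi_{ik}\,\mathrm{div}(q^k)_i=\sum_{e_{ij}}q^k(e_{ij})(\psi_{ik}-\psi_{jk})$; the inner maximum then equals $w_{ij}\abs{\psi_{ik}-\psi_{jk}}$ edge by edge. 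This recasts \eqref{eq:potts2} as the min--max problem $\min_{\Psi^i\in T_K}\max_{\abs{q^k}\le w}\big[\sum_{k,i}\psi_{ik}\,\mathrm{div}(q^k)_i+\sum_{k,i}M_{ik}\psi_{ik}\big]$.

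Next I would interchange $\min$ and $\max$: the inner functional is bilinear in $\psi$ and $q$ and continuous, the product of simplices $\{\Psi^i\in T_K\}$ is compact and convex, and the box $\{\abs{q^k(e_{ij})}\le w_{ij}\}$ is compact and convex, so Sion's minimax theorem (equivalently, strong duality of the underlying finite linear program) guarantees no duality gap. After the swap the inner step is $\min_{\Psi^i\in T_K}\sum_i\sum_k\psi_{ik}(\mathrm{div}(q^k)_i+M_{ik})$; minimizing a linear functional over each simplex picks out the smallest coefficient, giving $\sum_i\min_k(\mathrm{div}(q^k)_i+M_{ik})$. Writing $\min_k b_{ik}=\max\{r_i^s:\, r_i^s\le b_{ik}\ \forall k\}$ introduces the source potential $r_i^s$, and the inequality $r_i^s\le\mathrm{div}(q^k)_i+M_{ik}$ is encoded by a slack sink flow $r_i^k\le M_{ik}$ through the conservation equality $\mathrm{div}(q^k)_i-r_i^s+r_i^k=0$; collecting these reproduces exactly \eqref{eq:dual} with its constraints.

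Finally, the multiplier claim follows from the saddle-point (KKT) conditions: in the representation where the equality constraints $\mathrm{div}(q^k)_i-r_i^s+r_i^k=0$ are dualized with multipliers $\psi_{ik}$, stationarity in $r_i^s$ returns $\sum_k\psi_{ik}=1$ and complementary slackness in $r_i^k\le M_{ik}$ forces $\psi_{ik}\ge 0$, so the optimal multipliers form a feasible $\Psi$ for \eqref{eq:potts2} attaining the common optimal value. The step I expect to be the main obstacle is the rigorous justification of the $\min$--$\max$ interchange; I resolve it through compactness of the feasible sets (restricting $r^k$ to the compact range attained at $r^k=M$ if necessary), after which the equivalence is precisely strong duality of a finite linear program.
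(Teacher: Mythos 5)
Your proof is correct, but it reaches \eqref{eq:dual} by a genuinely different route than the paper. The paper never keeps the simplex constraint as a hard constraint: it dualizes \emph{everything} at once --- the TV term via \eqref{eq:app-1} exactly as you do, but also the term $M_{ik}\psi_{ik}$ \emph{jointly with} the sign constraint $\psi_{ik}\geq 0$ via the variable $r_i^k\leq M_{ik}$ (see \eqref{eq:app-2}, where $r_i^k\to-\infty$ penalizes $\psi_{ik}<0$), and the row-sum constraint $\sum_k\psi_{ik}=1$ via the free variable $r_i^s$ (see \eqref{eq:app-3}). This produces the saddle-point model \eqref{eq:pd} in which $\psi$ is completely unconstrained; minimizing over free $\psi$ then forces the coefficient of each $\psi_{ik}$ to vanish, which is precisely the equality constraint of \eqref{eq:dual}, and the multiplier interpretation of $\psi$ is immediate from the structure of the Lagrangian. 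You instead keep $\Psi^i\in T_K$ explicit, interchange $\min$ and $\max$ by Sion's theorem, evaluate the inner minimum over the simplex as $\sum_i\min_k(\mathrm{div}(q^k)_i+M_{ik})$, and only then introduce $r_i^s$ as an epigraph variable and $r_i^k$ as a slack --- the verification that $r_i^k=r_i^s-\mathrm{div}(q^k)_i\leq M_{ik}$ reproduces the conservation equality is exactly right, as is the KKT argument recovering $\sum_k\psi_{ik}=1$ and $\psi_{ik}\geq 0$ for the multipliers. What your approach buys is a rigorous justification of the interchange, which the paper glosses over with ``it is easy to prove'': in your formulation the swap involves only $\psi$ (compact product of simplices) and $q$ (compact box), so Sion applies cleanly --- note this also makes your worry about compactifying $r^k$ unnecessary, since the $r$ variables enter only after the swap. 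What the paper's approach buys is the primal--dual model \eqref{eq:pd} itself, with $\psi$ appearing as the multiplier of the linear equality: that Lagrangian is exactly the object on which the ALM algorithm (Alg.~\ref{alg-1}) is built, so the paper's derivation doubles as the derivation of its numerical scheme, whereas your derivation establishes the equivalence but would still need \eqref{eq:pd} to be written down separately to motivate the algorithm.
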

\begin{proof}
The proof can be found in the appendix \ref{sec:app1}.
\end{proof}

Given the fact that the dual optimization model \eqref{eq:dual} is equivalent to the studied convex optimization problem \eqref{eq:potts2}, where
the optimum ${\bf \Psi}$ to \eqref{eq:potts2} works as the optimal multipliers to the linear equality of the dual problem \eqref{eq:dual}, the energy function of the respective primal-dual model \eqref{eq:pd} is just the conventional Lagrangian function of \eqref{eq:dual}:
\[
L(\psi, q, r) \, = \, \sum_{i=1}^n r_i^s \, + \, \sum_{k=1}^K \sum_{i=1}^n \psi_{ik}
\Big( \mathrm{div}(q^k)_i \, - \, r_i^s \, + \, r_i^k \Big)\, .
\]

In this paper, we employ the classical augmented Lagrangian method (ALM) \cite{citeulike1859441} to construct a novel efficient ALM-based algorithm to tackle the linear equality constrained dual optimization problem \eqref{eq:dual}, which can resolve both $\psi$ and the additional dual variables $(q,r)$ simultaneously.
Upon the above classical Lagrangian function, we define its augmented Lagrangian function
\[
L_c(\psi, q, r) \, = \, L(\psi, q, r) \, - \, \frac{c}{2} \sum_{k=1}^K \sum_{i=1}^n \Big( \mathrm{div}(q^k)_i \, - \, r_i^s \, + \, r_i^k \Big)^2\, .
\]

Therefore, the proposed ALM-based algorithm to the dual optimization problem \eqref{eq:dual} explores two major steps at each iteration till convergence (see Alg. \ref{alg-1} in the appendix \ref{sec:app2} for details):
\begin{enumerate}
\item fix $\psi^{t-1}$, compute $(q, r)^t$ by maximizing $L_c(\psi, q, r)$:
\[
(q, r)^t \, = \, \arg \max_{q, r} \, L(\psi^{t-1}, q, r) \, ;
\]
\item update $\psi^t$ by the computed $(q, r)^t$:
\[
\psi^t \, = \, \psi^{t-1} \, - \, c \big(\mathrm{div}(q^k) \, - \, r^s \, + \, r^k\big)^t\, .
\]
\end{enumerate}

Once the proposed ALM-based (Alg. \ref{alg-1}) converges to some optimum $(\psi_{ik}^*)$, $i=1 ... n$ and $k=1 ... K$, we can simply round $(\psi_{ik}^*)$ into its binary version, such that for each node $v_i$, $\psi^*_{ik} = 1$ when $k = \arg \max (\psi^*_{i1}, ..., \psi^*_{iK})$ and $\psi^*_{i j\neq k} = 0$.

\section{Network Structure Centralities and Two-Stage Community Partition}

Clearly, it is the key factor for partitioning a graph or network accurately that the right affinity description $(w_{i j})$ is provided for \eqref{fi_function} and \eqref{similarityD}. The classical way for most state-of-art clustering methods is to employ similarities between nodes or specified nodal features for constructing the associate affinity matrix, which is, however, unavailable in many cases of network clustering. In this work, we propose a novel method to calculate such network affinity matrix $(w_{ij})$ based upon inherent structure centrality of the network links and vertices, and introduce a new two-stage optimization strategy (TSOS) to cluster the communities with both efficiency and accuracy.

\subsection{Network Structure and Betweenness of Links}

In this section, we define the affinity/adjacency matrix $(w_{ij})$ directly from the network structure information of link centrality, i.e. betweenness of network links.

In fact, betweenness of the network link $e_{ij}$ is defined as the total number of shortest paths that pass through $e_{ij}$ \cite{bai2017effective,Dunn2005The} from all vertices to all the other vertices, such that
\begin{equation}
\mathrm{BCL}(e_{ij})\,=\,\sum\limits_{l\ne k}{\frac{{{g}_{lk}}(e_{ij})}{{{g}_{lk}}}} \, ,
\end{equation}
where $g_{lk}$ is the total number of all shortest paths from any node $v_l$ to a different node $v_k$, and $g_{lk}(e_{ij})$ is the number of such paths through the link $e_{ij}$.

\begin{figure}[h]
\centering
  \includegraphics[width=0.5\textwidth]{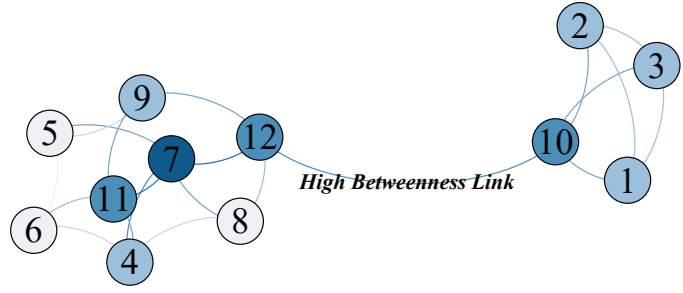}
\caption{High betweenness of a link can be viewed as the bridge to connect two communities.
As shown above, the link between the node $v_{12}$ and the node $v_{10}$ can be viewed as the key bridge edge between two distinct communities.}
\label{lin}
\end{figure}

Betweenness of the link is actually one of the most important factors for network partition: high betweenness of a link can be taken as the bridge to connect two communities, which means that once removed, the number of isolated network blocks would increase\cite{Bollob1998Modern}. Indeed, this is exactly the expected edge to separate the network.
To this end, for an edge $e_{ij}$, we can define the corresponding weight $w_{ij} = f(\mathrm{BCL}(e_{ij}))$ where $f(\cdot)$ is a positive strictly decrease function, i.e. the cost of cutting the edge $e_{ij}$ with high betweenness value is low, hence partitioning would likely happen on this edge.
In this paper, we consider the inverse of betweenness $\mathrm{BCL}$ as the definition of the affinity matrix $(w_{ij})$:
\begin{equation} \label{equation:similarity matrix}
{w_{ij}} = \left\{ \begin{array}{ll}
1/\mathrm{BCL}(e_{ij}) & \quad \text{if}\;  e_{ij}\, \in\, E\\
0 & \quad  \text{otherwise}
\end{array} \right. \, .
\end{equation}
Hence, the parameter values $p_{ik}$ of the optimization problem \eqref{eq:potts} can be computed through \eqref{similarityD}.

\subsection{Nodal Centrality, Benchmark Confidence and Two-Stage Optimization Strategy (TSOS)}

\begin{figure}[h]
\centering
  \includegraphics[width=0.38\textwidth]{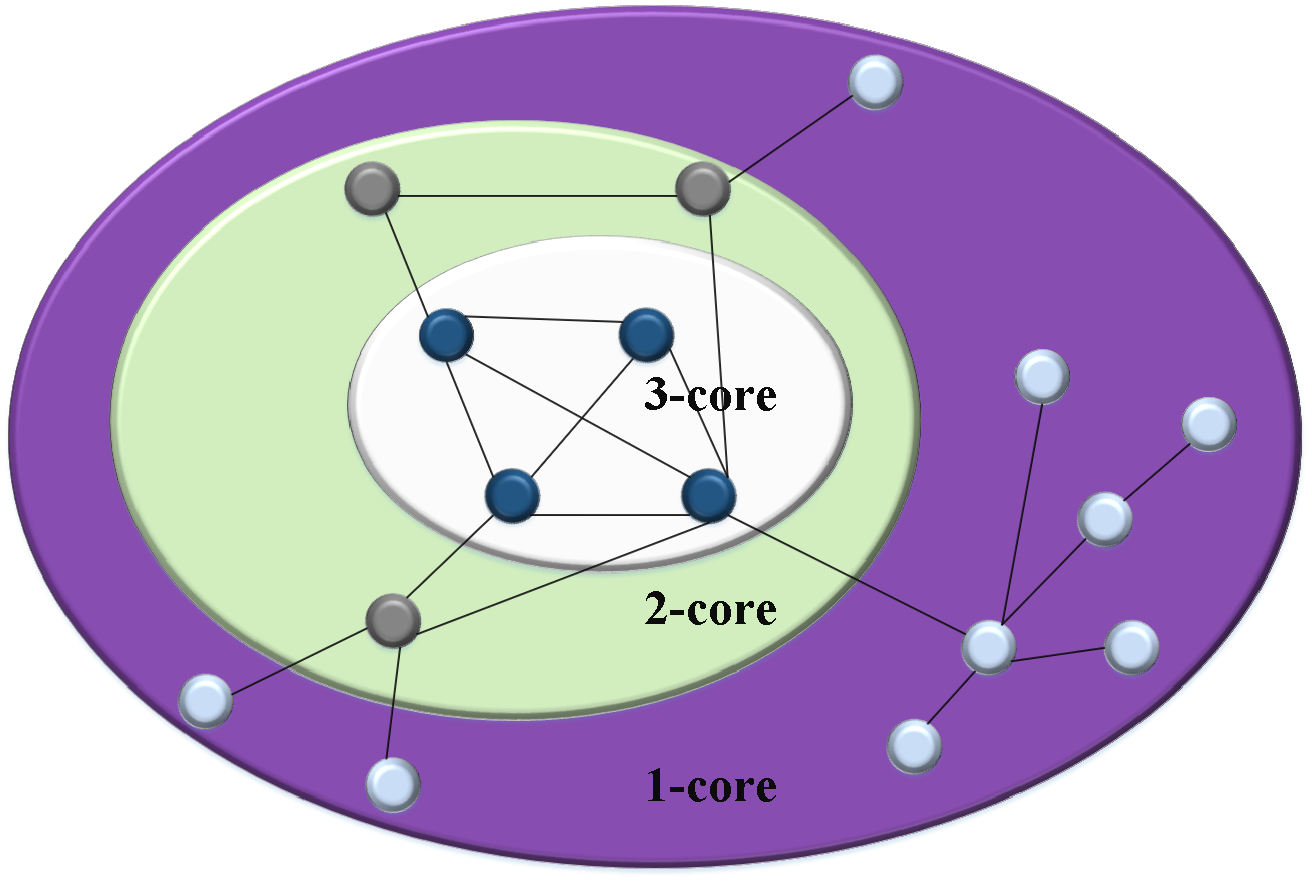}
  \caption{The diagram of k-shell decomposition, including 1-core nodes (violet), 2-core nodes (light green) and 3-core nodes (white).
}
\label{k-shell}
\end{figure}

Actually, the `core members' or `core nodes' of each network community are closely connected with each other and dominate more nodes than the other ones within one hop range, which defines the centrality of network nodes. Clearly, such core nodes with the correct label will directly find the other core nodes with the same label, i.e. the core members of the respective community, once the core nodes are discovered beforehand (see the following section for details).

The topological centrality of each node can be quantified through the concept of $k$-core, which is defined as the largest subnetwork in which every node has at least $k$ links, i.e. with the degree $k$. As shown in Fig. \ref{k-shell}, the k-core of a given network can be obtained by recursively removing all nodes with the degree less than $k$, until all the nodes in the remaining network have the degree not less than $k$. Repeating this for $k =1, 2, ...$, finally determines the $k$-shell decomposition of a network.
Hence, the coreness of each node $v_i$, $i=1 ... n$, is then defined as the integer $\gamma_i$ for which this node belongs to the $\gamma_i$-core but not to the $(\gamma_i + 1)$-core~\cite{yang2017small,dorogovtsev2006k}.
In general,  the node with a bigger coreness value must have a higher centrality. In this paper, we therefore adopt such coreness to characterize the nodal centrality and the `core nodes', or `core members', are the ones with the highest coreness number.

On the other hand, the coreness of any node defines the closeness of such node to the `core nodes' of the associate community; hence, we
define the related `confidence' measure that evaluates the possibility of any node $v_i$ belonging to the corresponding community $k$:
\bq \label{eq:confidence}
{\pi _{ik}} \, = \, {\gamma _i} \cdot  \max_{j \in {S_k}}w_{ij} \, , \quad \, i \, = \,1 \cdots n\, 
\eq
where bigger $w_{ij}$, $j \in S_k$, means lower betweenness in terms of \eqref{equation:similarity matrix} and lower likelihood for cutting the associate link $e_{ij}$, so higher `confidence' to associate two nodes $v_i$ and $v_j$. In this sense, $\pi_{ik}$ actually evaluates the `confidence'
to combine the node $v_i$ into the benchmark set of the community $k$.

With such `confidence' measure, we introduce a two-stage optimization framework: it first computes an initial network partition through the proposed
ALM-based dual optimization algorithm; once the initial partitioned communities are obtained, the `confidence' measure $\pi_{ik}$ for each node $v_i$ within its initial partition $k$ is calculated by \eqref{eq:confidence}, so as to choose the new nodes with high `confidence' as \eqref{eq:sigma} into the related benchmark set $k$; using the expanded benchmark sets, the proposed ALM-based dual optimization is explored to recompute the network partition. More details of the two-stage optimization strategy can be found in Alg. \ref{alg-2}.

In fact, the proposed two-stage optimization strategy does not require many initial benchmark nodes to ensure the accuracy of network partition, since the benchmark sets can be expanded with more dominate nodes of high confidence.
Meanwhile, the two-stage optimization method, along with increasing benchmark nodes, essentially reduces the total number of undetermined graph nodes, this improves efficiency of the following partition procedure. On the other hand, the alternating steps of optimization and benchmark expansion can be performed not only two but also more than two times, hence a multi-stage optimization method. In practice, we found the two-stage-optimization can reach the result good enough, using more than two optimization stages does not improve the results significantly (see the experiment results of Fig. \ref{exp:ssl} for details).

In this work, we often pick nodes with high `confidence' into the benchmark set, whose related $\pi_{ik}$ suffice the following condition:
\bq \label{eq:sigma}
\pi_{ik} \, \geq \, \bar{\pi}_k \, +\, \delta \sigma_k
\eq
where $\bar{\pi}_k$ and $\sigma_k$ are the average and standard deviation of all the values $\pi_{ik}$, and $\delta > 0$, see Sec. \ref{sec:delta} for choosing the proper parameter $\delta$ for experiments in this work.

\begin{algorithm}[!h]
    \caption{Two-Stage Optimization Strategy \label{alg-2} }
    \begin{algorithmic}[1]
    \State {\bf Setup up:} choose benchmark nodes $S_{k} ~~(k = 1,2, \cdots ,K)$, calculate the affinity matrix $ (w_{ij})$ and the costs $M_{ik}$, $i=1 ... n$ and $k=1 ... K$, by \eqref{similarityD} and \eqref{equation:similarity matrix};
    \State {\bf Compute initial partitions:} utilize the proposed ALM-based dual optimization algorithm (Alg. \ref{alg-1}) to compute the initial partition results $\psi_{ik}^*$, $i=1 ... n$ and $k=1 ... K$;
    \State {\bf Expansion of benchmark sets:} with the initial partition results, the 'confidence' measure $\pi_{ik}$ for each node $v_i$ within its initial partition $k$ is calculated by \eqref{eq:confidence}, choose the new nodes with high `confidence', e.g. \eqref{eq:sigma}, into the related benchmark set $k$;
    \State {\bf Refine partitions:} use the expanded benchmark sets, the proposed ALM-based dual optimization algorithm (Alg. \ref{alg-1}) is employed to recompute the network partition.
    \end{algorithmic}
\end{algorithm}

\section{Experiments}
We, in this work, explore two artificial networks of GN and LFR and 5 real-world networks to validate the effectiveness and efficiency of the proposed
two-stage optimization strategy (TSOS), see Alg. \ref{alg-2}, for partitioning  the given network into multiple communities with inherent network structure information.
Experiment results are recorded from the average performance of 20 independent trials, and compared with ground-truth.

For the unweighted networks of GN, LFR, Dophin, Football, and Polbooks, their affinity matrices $(w_{ij})$ are calculated by \eqref{equation:similarity matrix}. For the data clustering networks of COIL and MINST, we adopt their given similarity weights to construct their affinity matrices $(w_{ij})$ directly by equation \eqref{similarityD}. In addition, we compare our proposed method with one of state-of-the-art data clustering approach
proposed by Yin et al \cite{yin2018effective}, namely the total-variation-based data clustering algorithm with region force (TVRF).

\subsection{Experiments of Benchmark Expansion, Optimization Stages and Parameter $\delta$}
\label{sec:delta}
\begin{figure}[h!]
\centering
\begin{tabular}[p]{c@{\,}c}
\subfigure[]{
  \includegraphics[width=0.46\textwidth]{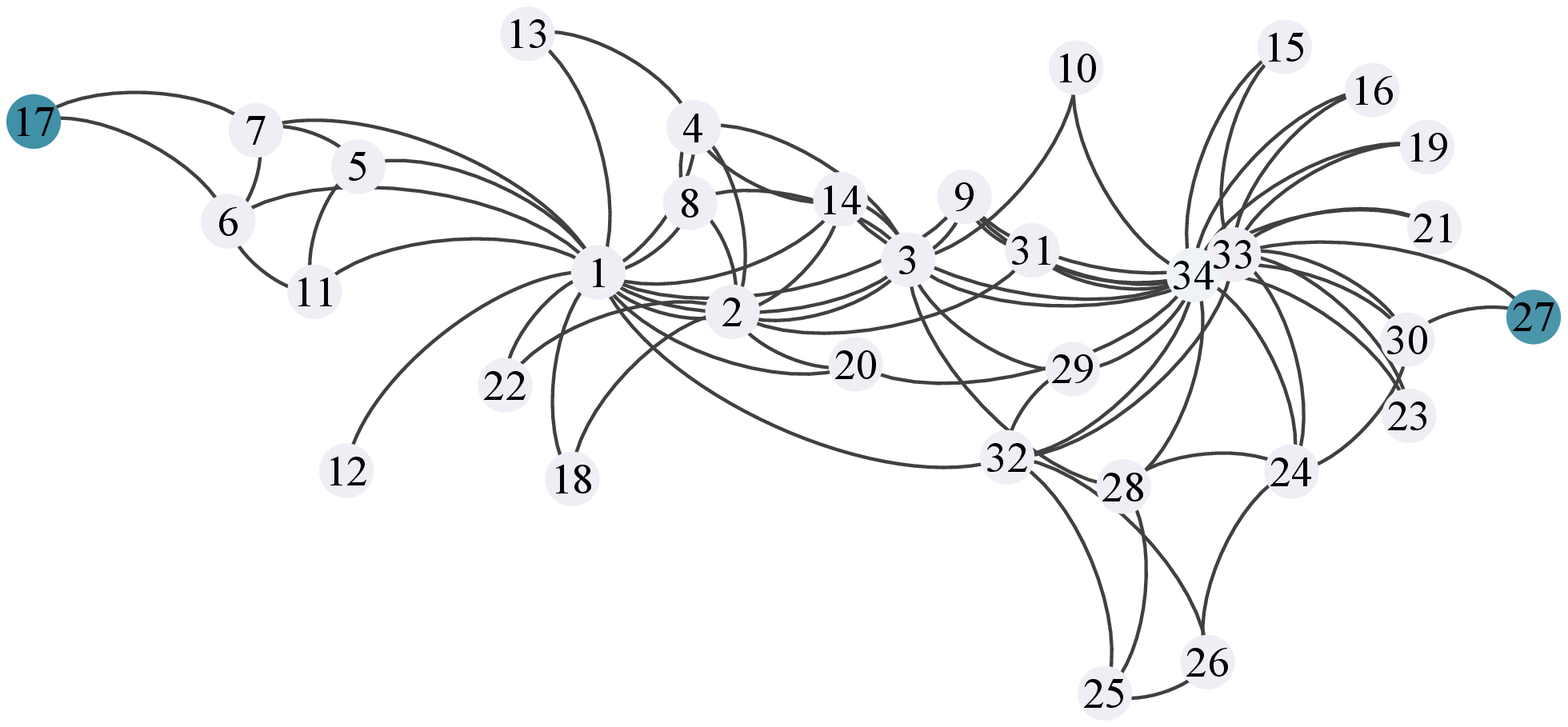}}&
  \subfigure[]{
  \includegraphics[width=0.46\textwidth]{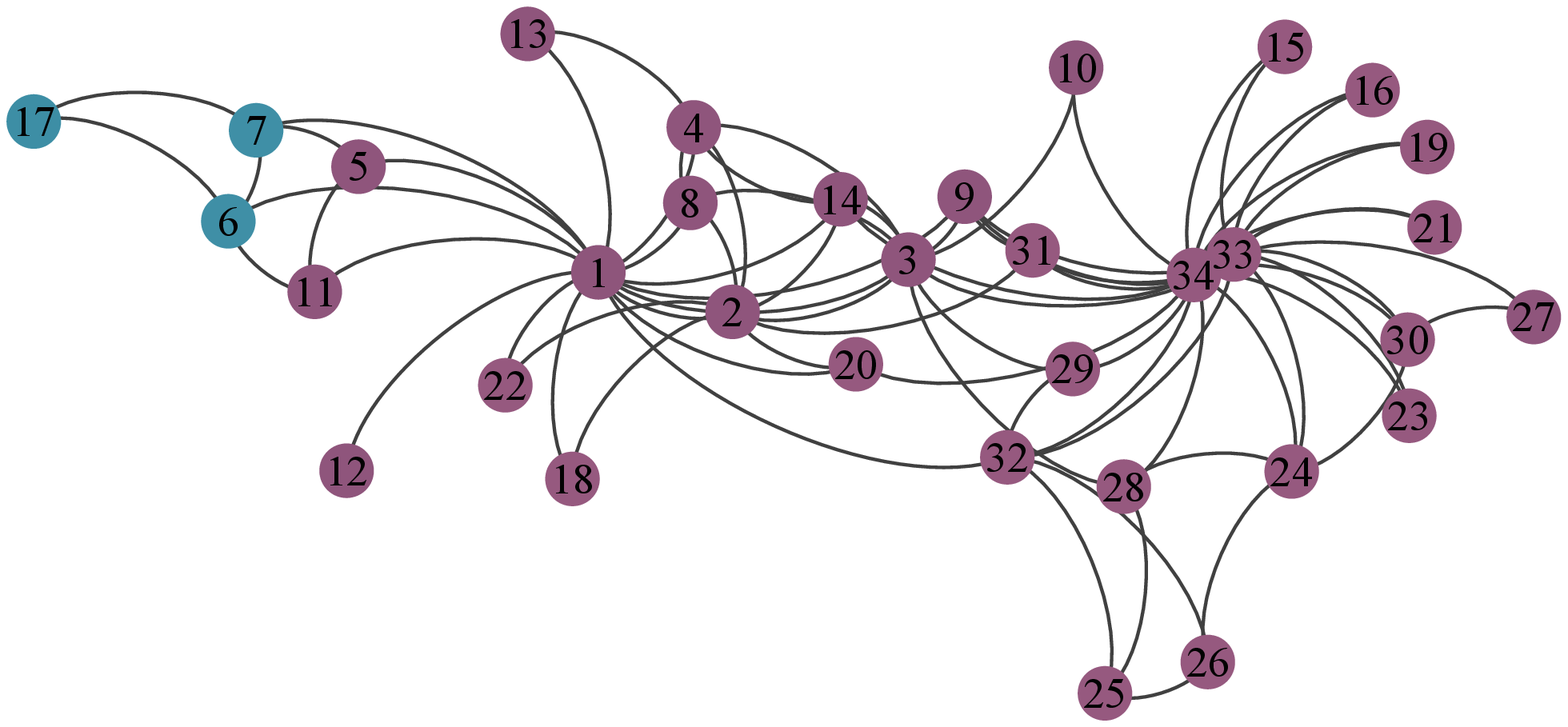}} \\
  \subfigure[]{
  \includegraphics[width=0.46\textwidth]{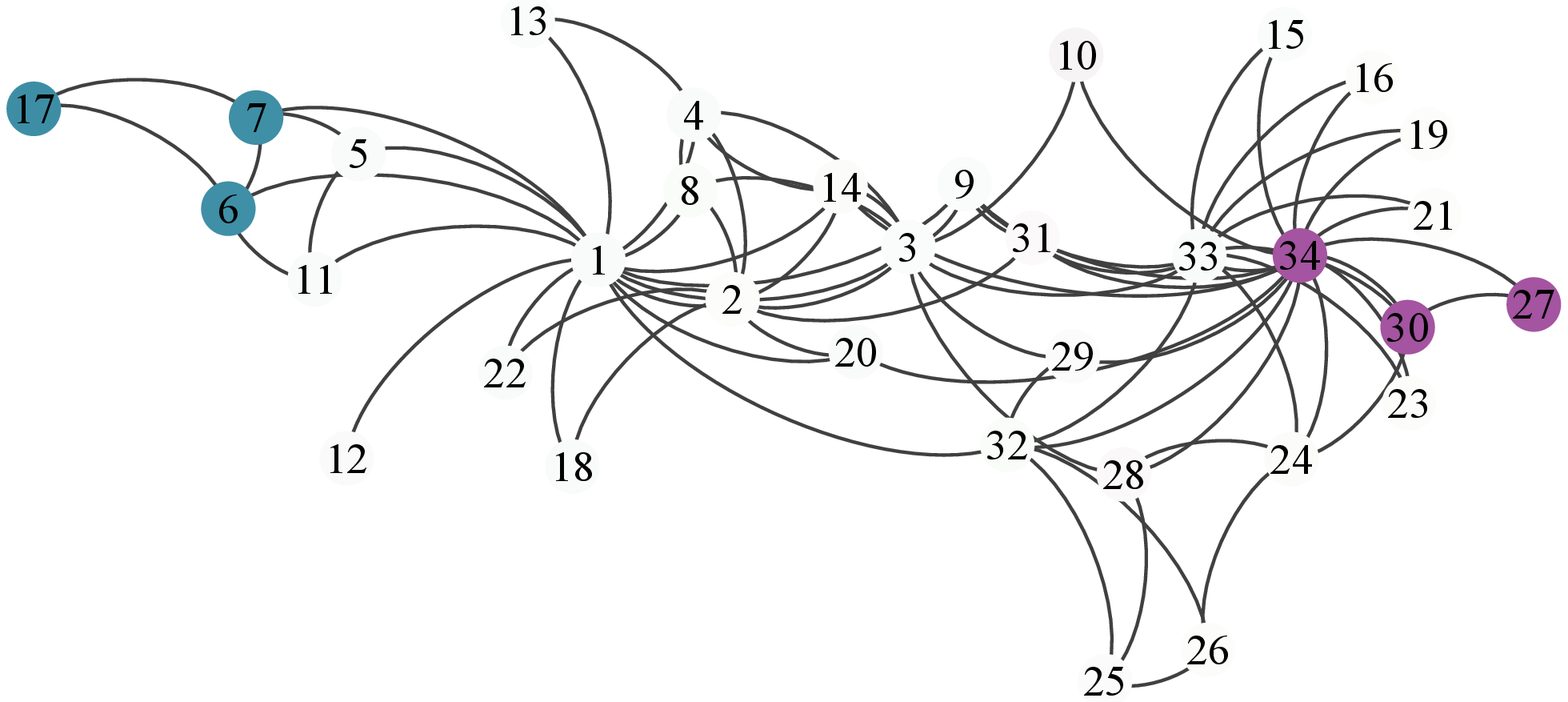}}&
  \subfigure[]{
  \includegraphics[width=0.46\textwidth]{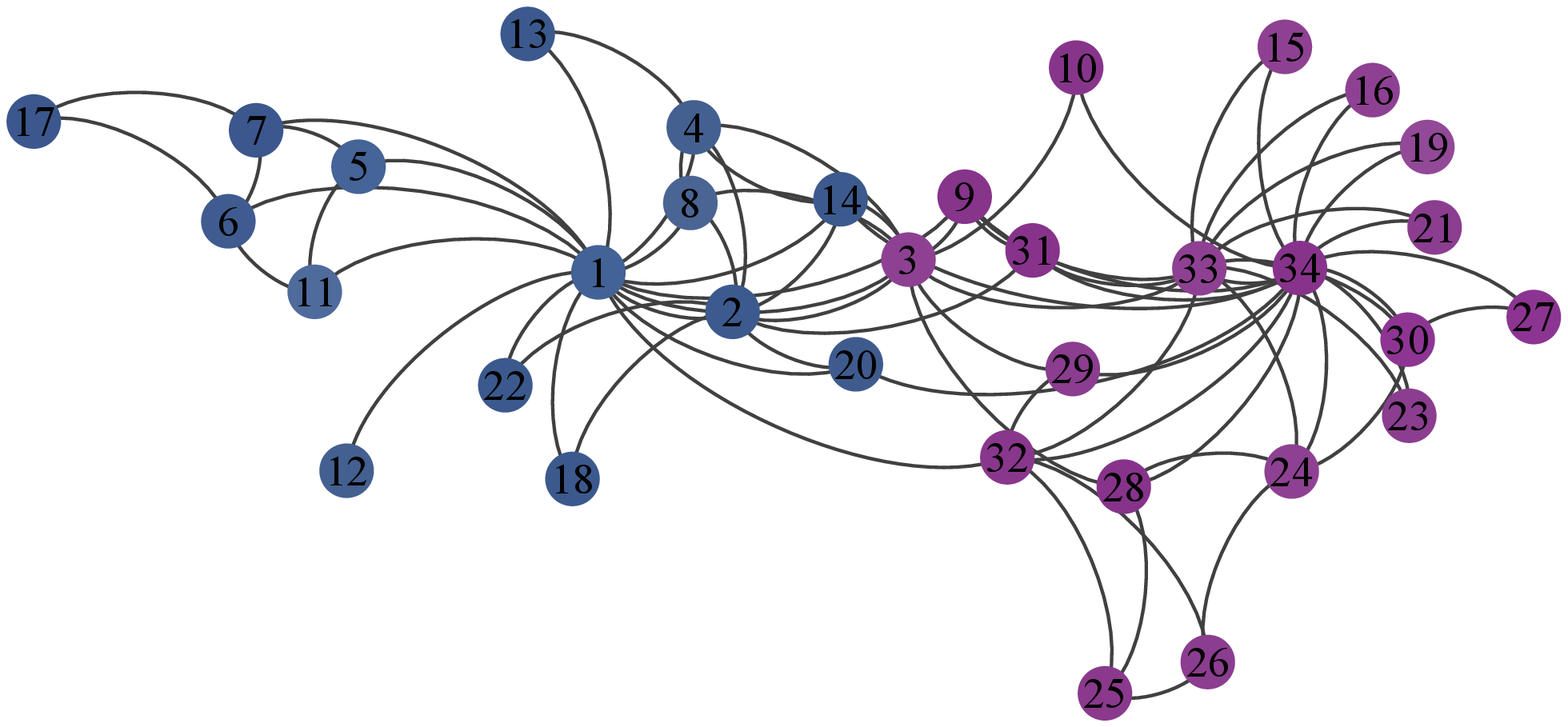}}
  \end{tabular}
\caption{(a). Nodes $v_{17}$ and $v_{27}$ are picked as the benchmark nodes for the two network communities respectively.
(b). Accuracy of the initial partition result by the proposed ALM-based algorithm is only $61.77\%$, due to its sparse network structure for which benchmark nodes can not provide more information to help partition.
(c). Expansion of benchmark nodes are performed by \eqref{eq:sigma} with the benchmark confidence measure \eqref{eq:confidence}.
(d). The followed partition procedure using ALM-based algorithm largely improves the accuracy to $97.06\%$.
\label{Toy}
}
\end{figure}

\subsubsection{Bechmark Expansion}
In this section, we show the proposed two-stage optimization strategy (TSOS) significantly improve the community partition results of networks, especially the sparsely connected networks. For the given sparse network of Zachary karate clubs consisting 34 vertices and 2 communities, as illustrated in Fig. \ref{Toy}, the labeled benchmark nodes dominate very few nodes (see Fig. \ref{Toy}(a)), thus provides not much network structure information and results in inaccurate partition result initially (see Fig. \ref{Toy} (b)).
Actually, shortage of pre-labeled nodes, or benchmark nodes with sufficient dominates, is often the big challenge for the state-of-the-art semi-supervised partition methods, which are suffering from less network structure information.
Expansion of benchmark nodes are performed by \eqref{eq:sigma} with the benchmark confidence measure \eqref{eq:confidence}.
New benchmark nodes are selected as shown in Fig. \ref{Toy}(c), where four nodes $v_{6,7,30,34}$ are inserted into two respective benchmark sets.
The followed partition procedure through ALM-based algorithm significantly improves the accuracy of community partition by $57.13\%$, see Fig. \ref{Toy}(d)!

\subsubsection{Optimization Stages}
\begin{figure}[!h]
  \includegraphics[width=0.5\textwidth,height=6.5cm]{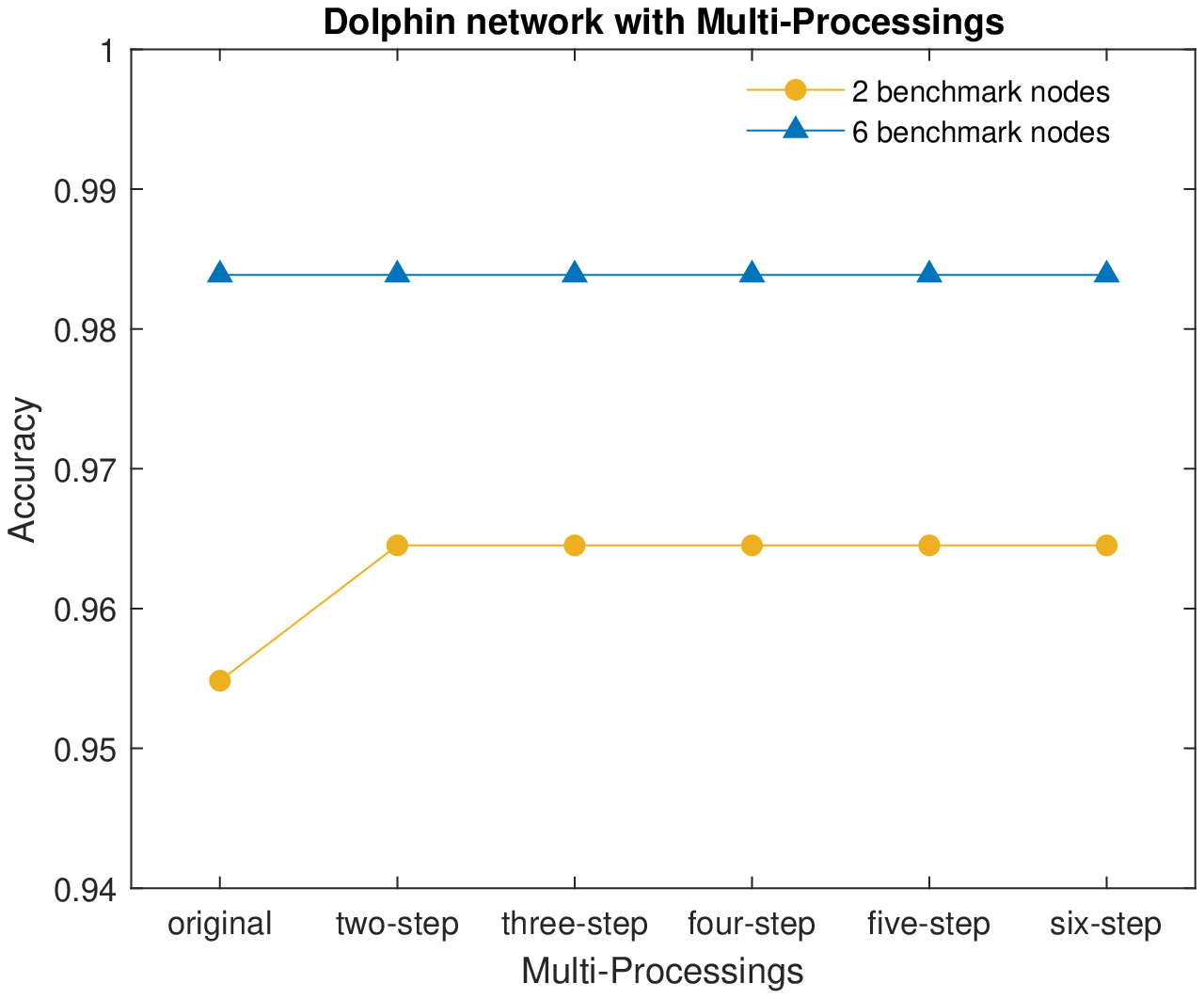}
  \includegraphics[width=0.5\textwidth,height=6.5cm]{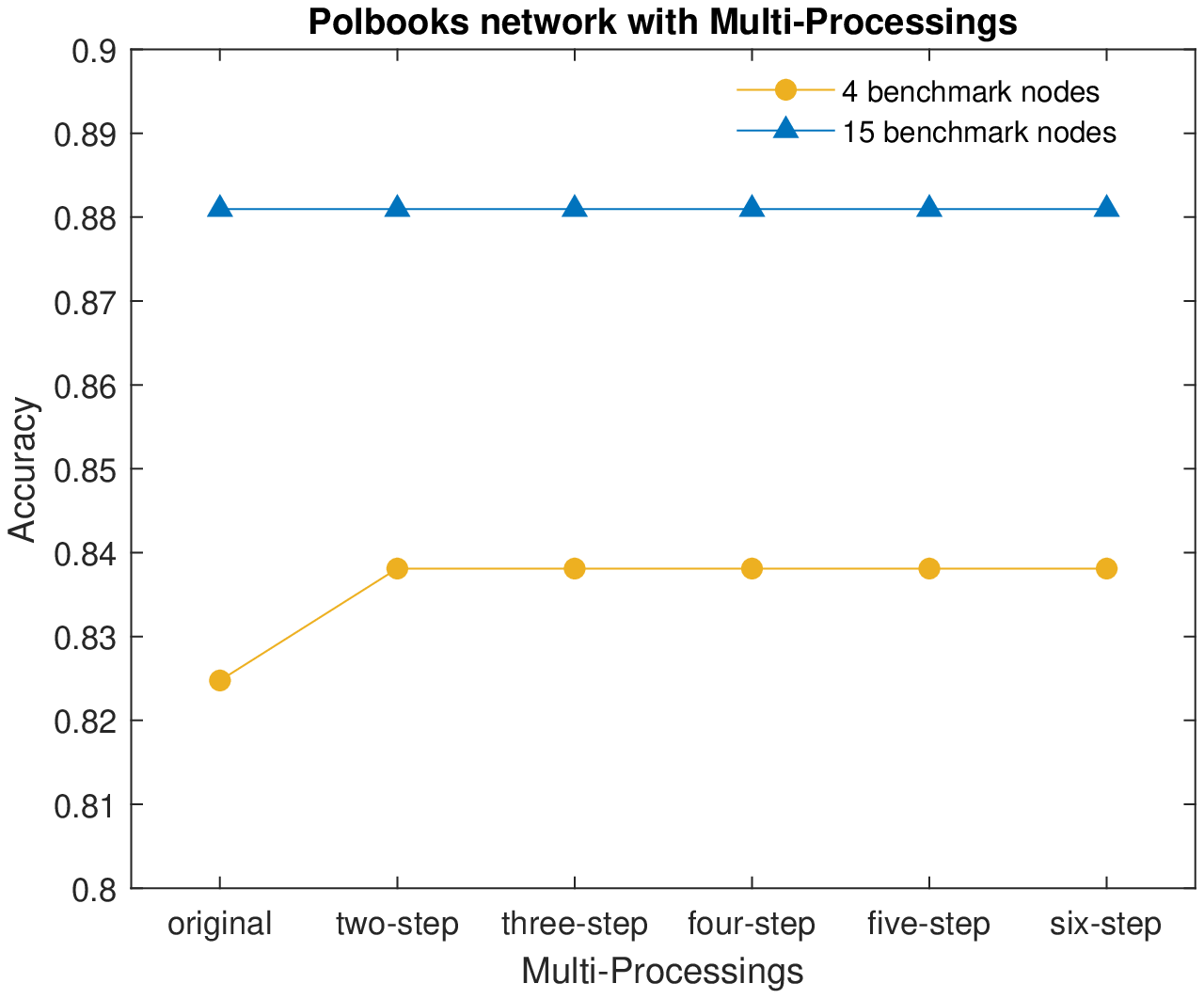}
  \includegraphics[width=0.5\textwidth,height=6.5cm]{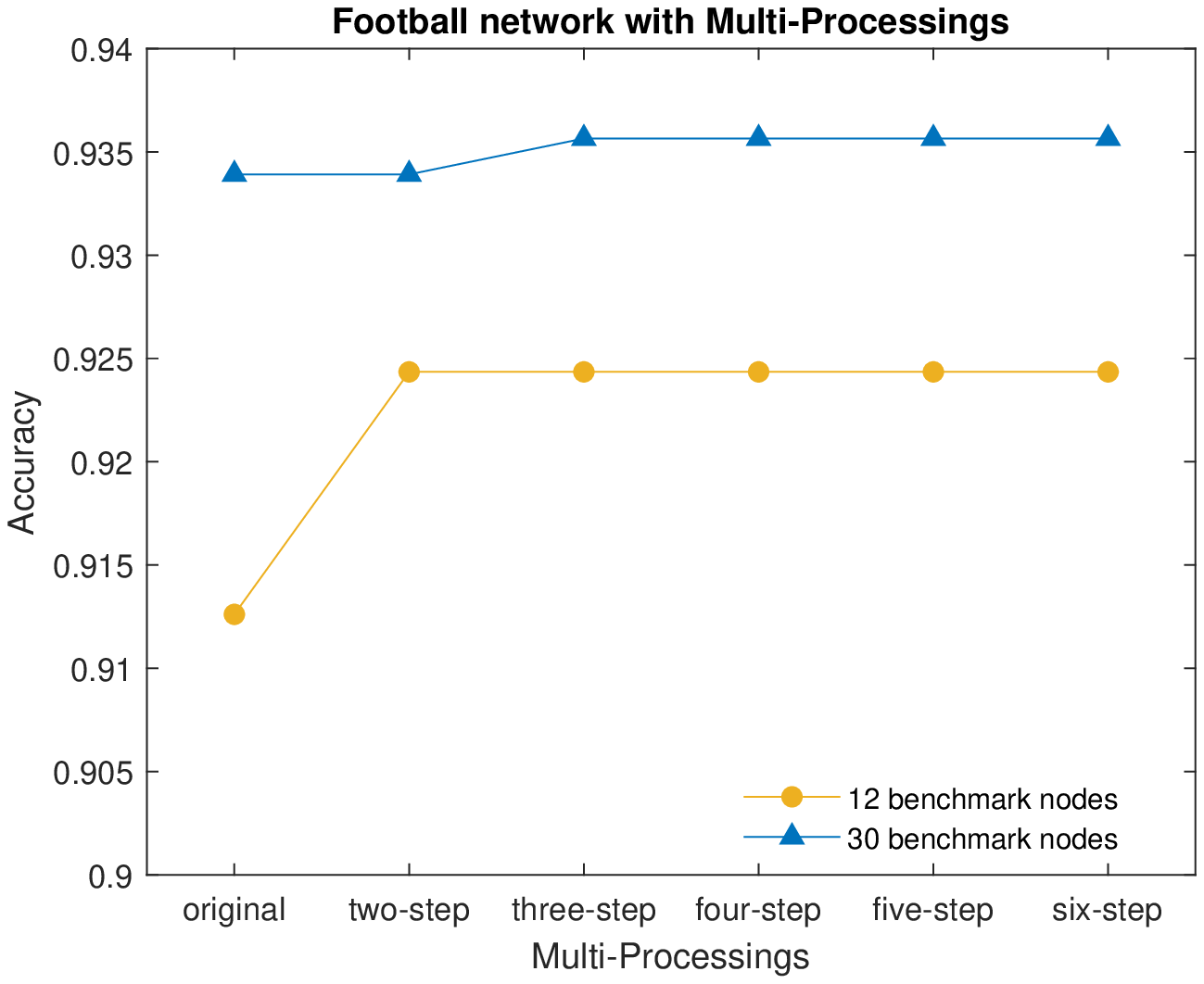}
  \includegraphics[width=0.5\textwidth,height=6.5cm]{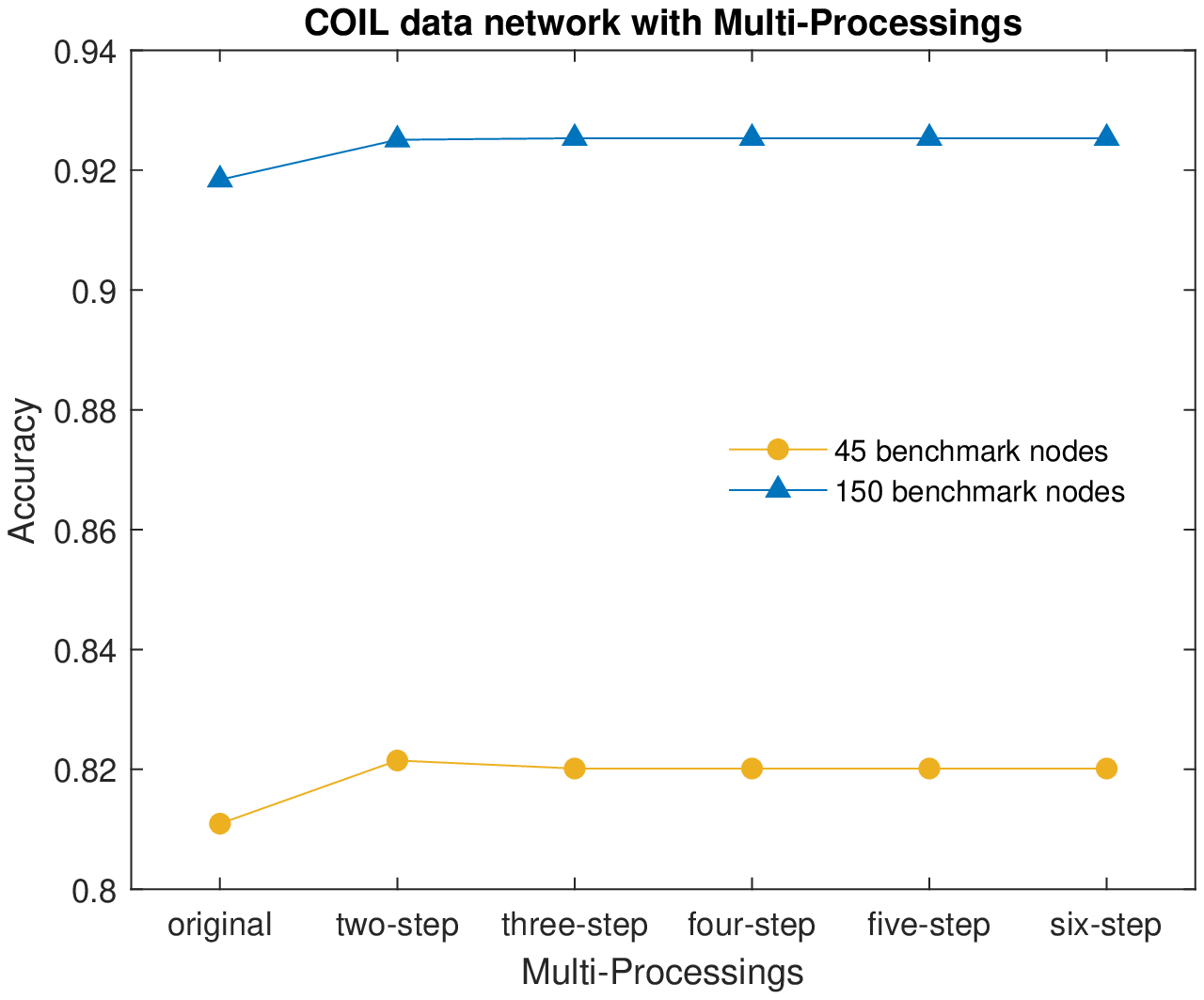}
\caption{\label{exp:ssl} Experiments over four networks of Dolphin network, Polbooks network, Football network and COIL, using more than two optimization stages: the experiment results, with different numbers of benchmark nodes (yellow and blue), show that taking more than two optimization stages does not essentially improve the final accuracy of network partition.}
\end{figure}

The procedures of optimization and benchmark expansion, as Alg. \ref{alg-1}, can be performed not only two but also more than two times, i.e. with multiple optimization stages. Experiment results shown in Fig. \ref{exp:ssl} indicate that the proposed two-stage-optimization strategy can reach the result with enough accuracy, performing more than two optimization stages does not improve the results significantly.

\subsubsection{Selection of Parameter $\delta$ in \eqref{eq:sigma}}
\begin{figure*}[!h]
\centering
\includegraphics[width=0.66\textwidth]{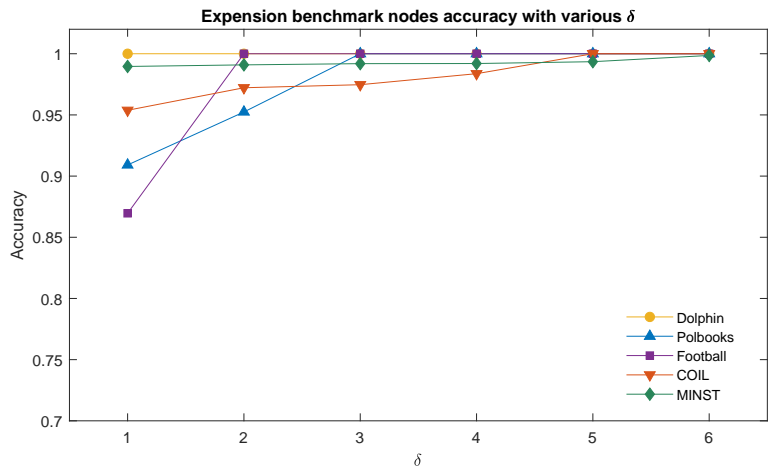}
\caption{\label{exp:view}  Experiment results with different $\delta$ for choosing new benchmark nodes, over the five networks used in this study, including Dolphin network, Polbooks network, Football network, COIL and MINST; clearly, most experiment results do not get noticeable improvements when $\delta \geq 3$. In this work, $\delta$ for experiments over the datasets of MINST and COIL are set $5$, and for the other networks we set $\delta = 3$.
}
\end{figure*}

Nodes with high `confidence' values are the good options for benchmarks. In order to ensure each new benchmark node chosen correctly, the value of $\delta$ should be selected high enough.
By Chebyshev's inequality \cite{Ghahramani2000Fundamentals}, it confirmed that, for any distribution, the amount of data within $\delta$ times of standard deviations is at least ratio $1-{\frac{1}{\delta^2}}$, which means
\bq
{Pr(\left |\pi - \bar{\pi}  \right | \geq \delta \sigma)} \leq {\frac{1}{\delta^2}}
\eq
Thus, the high value of $\delta$ with small appearing probability stands out for a 'trustable' selection. 
Experiment results over five different networks, see Fig. \ref{exp:view}, show that most experiments do not get noticeable improvements when $\delta \geq 3$.
In this work, we choose $\delta = 3$ for networks of average degree $\bar{d} \geq 5$; $\delta = 2 $ for networks with average degree $3 \leq \bar{d} < 5$; $\delta = 1$ for networks with average degree $\bar{d} < 3$, for example the graphs of MINST and COIL.

\subsection{Experiments on Artificial Networks of GN and LFR}

\begin{table}[!h]
\caption{{Partition accuracy with various types of GN networks}}
\label{table:GN}\centering
\scalebox{1}{
\begin{tabular}{p{2cm}p{3cm}p{3cm}p{3cm}}
\toprule
Algorithms & Accuracy (\%) \hfill\break Classical & Accuracy (\%) \hfill\break $Z_{out}=3$ & Accuracy (\%) \hfill\break $Z_{out}=6$ \\
\midrule
TVRF(3\%) &$100(\pm 0)$ &$97.67(\pm1.79)$ &$77.66(\pm 5.23)$ \\
TSOS(3\%) &$100(\pm 0)$ &$\mathbf{100(\pm 0)}$ & $\mathbf{87.18(\pm 6.08)} $\\
\bottomrule
TVRF(6\%) &$100(\pm 0)$ &$98.96(\pm 0.96)$ &$81.42(\pm 3.93)$\\
TSOS(6\%) &$100(\pm 0)$ &$\mathbf{100(\pm 0)}$ &$\mathbf{95.5(\pm 2.53)}$ \\
\bottomrule
\end{tabular}}
\end{table}

GN artificial network \cite{newman2004finding} is proposed by Grivan and Newman, which is still one most popular topics discussed in related literatures. It provides a basic node set \cite{fortunato2016community} with $K=4$ communities, the average total degree of each node is fixed to $16$. At the same time, GN provides a flexible network generation mechanism which is controlled by the number of nodes of each community $n_k$, the number of communities $K$, the number of internal half-edges per node $Z_{in}$, and the number of external half-edges per node $Z_{out}$ etc. Studies \cite{guimera2005functional} show that the parameters $Z_{in}$ and $Z_{out}$ determine the detectable of network communities. Higher value of $Z_{out}$ decreases the detectability of network communities \cite{krause2003compartments}.
In this study, we test our proposed TSOS comparing with TVRF, over three different types of GN networks including the classical GN network and its two variants with different $Z_{out}$ ($Z_{out}=3$ and $Z_{out}=6$), for which each community has at least one benchmark node and the fraction of benchmark nodes is set as $3\%$ and $6\%$.

As the results shown in Table \ref{table:GN}, picking more benchmark nodes results in higher partition accuracy while the same algorithm configuration is set up. For the classical GN network, both algorithms can reach $100\%$ accuracy when only $3\%$ nodes are used as benchmark.  The proposed TSOS can still obtain a completely correct result for the GN network with $Z_{out} = 3$, and a much higher partition accuracy than TVRF for the difficult case with $Z_{out} = 6$. This shows the effectiveness of the proposed strategy by incorporating new benchmark nodes into an additional step of network partition refinement.

\begin{table*}[!t]
\caption{{Partition accuracy with various types of LFR networks}}
\label{table:LFR}\centering
\scalebox{1}{
\begin{tabular}{lp{2.3cm}p{2.3cm}p{2.3cm}p{2.3cm}p{2.3cm}}
\toprule
Algorithms & Accuracy (\%) \hfill\break $\mu=0.1$ & Accuracy (\%) \hfill\break $\mu=0.2$ & Accuracy (\%) \hfill\break $\mu=0.3$ & Accuracy(\%) \hfill\break $\mu=0.4$  & Accuracy(\%) \hfill\break $\mu=0.5$\\
\midrule
TVRF(4\%)  & $97.77(\pm0.519)$       &  $94.37(\pm0.766)$  &  $89.06(\pm0.854)$                & $78.9(\pm1.313)$               & $65.2(\pm3.00)$                    \\
TSOS(4\%) & $\mathbf{99.92(\pm0.021)}$    &  $\mathbf{99.26(\pm0.375)}$ &   $\mathbf{96.38(\pm0.884)}$              & $\mathbf{87.06(\pm2.300)}$ &  $\mathbf{75.72(\pm2.29)}$\\
\bottomrule
TVRF(8\%) &  $98.75(\pm0.357)$           &  $94.86(\pm0.726)$ &   $89.77(\pm1.0371)$                       & $83.43(\pm1.3929)$    &  $71.86(\pm1.59)$\\
TSOS(8\%) &  $\mathbf{100(\pm0)}$     &  $\mathbf{99.8(\pm0.133)}$  & $\mathbf{98.08(\pm0.54)}$        & $\mathbf{93.58(\pm0.801)} $  & $\mathbf{83.18(\pm1.40)}$   \\
\bottomrule
\end{tabular}}
\end{table*}

In contrast to the homogeneous GN networks whose nodes have the same degree, which is actually
not a good proxy of real networks with community structure, the artificial benchmark LFR network, proposed by Lancichinetti, Fortunato and Radicchi \cite{lancichinetti2008benchmark}, has a power law distribution of degree.
LFR benchmark is basically a configuration model with built-in communities \cite{bollobas2004extremal}, which is built by joining stubs at random selection, once one has established which stubs are internal and which ones are external to the stubs \cite{fortunato2016community}. The mixing parameters $\mu_{i}$ is the ratio between the external degree $ext$ and the degree $d_i$ of each vertex $i$ i.e. ${\mu _i} = d_i^{ext}/{d_i}$. Obviously, when $\mu $ is low, each community can be better separated from the others. Here, we generate $5$ networks including $n=1000$ nodes, with the value of $\mu$ ranging from [0.1 0.5], the distributions of degree $d$ and community size $\abs{C}$ follow respective power laws of $d^{-2}$ and $\abs{C}^{-1}$, the average degree is set to $15$ and the community sizes $\abs{C_k}$, $k=1 ... K$, are set from $20$ to $50$.

In the experiments, each community has at least one benchmark node; $4\%$ and $8\%$ nodes are selected as benchmarks for each experiment, so about $40$ and $80$ benchmark nodes are picked, which are slightly bigger than the total number of communities, i.e. rather small samples. As shown in Tab.\ref{table:LFR}, when $\mu$ increases, our proposed TSOS method can still keep the results with high accuracy and perform much better than the TVRF algorithm, hence more robust to increasing external degree, i.e. high mixing parameter $\mu$ does not affect the performance of TSOS more than TVRF. On the other hand, choosing more benchmark nodes promotes both algorithms' performance; however, the proposed TSOS gets improved more significantly.

\subsection{Experiments on Real-World Networks}

In this work, five real-world networks are used to validate the proposed TSOS method, which includes three classical networks of Dolphin network \cite{Lusseau2003The}, Football network \cite{girvan2002community} and Political book network \cite{Newman2006Modularity}, and two data clustering sets of MINST \cite{yin2018effective} and COIL \cite{Chapelle2009Semi}. The three classical social networks are widely used in many community detection studies; the COIL-100 (Columbia object image library-100) data set \cite{Nayar1996Columbia} contains many color images of $100$ different objects, its related graph network used in this paper includes $24$ randomly selected objects (1500 images) from the dataset and the edge weights of the built-up $5$-NN graph are calculated through the Euclidean distance between two images; the MINST data \cite{L1998Gradient} totally consists of 70000 size-normalized and centered images of handwritten digits $0-9$, the images are naturally partitioned to $10$ roughly balanced clusters,  a $10$-NN graph is constructed from the original MINST data set and its edge weights are computed from the Euclidean distance between two images as $784$-dim vectors \cite{yin2018effective}. These networks are considered as undirected and their network parameters are shown in Tab. \ref{table:topology}.

\begin{table}[!h]

\caption{{Five real-world networks are used for experiments with $\bar{d}$ as the average network degree.}}
\label{table:topology}\centering
\scalebox{1}{
\begin{tabular}{lp{2cm}p{2cm}p{2cm}p{1.5cm}p{1.5cm}p{3.2cm}}
\toprule
P & Network & Nodes & Edges & Clusters & $\bar{d}$ & Clustering coefficient  \\
\midrule
1& Dolphin & 62 & 159 & 2& 5.129 & 0.303 \\
2& Polbooks & 105 & 441 & 3& 8.400& 0.488 \\
3& Football & 115 & 613 & 12& 10.66& 0.403 \\
4& COIL& 1500& 3750 &6&5&-\\
5& MNIST & 70000& 350000&10&10&-\\
\bottomrule
\end{tabular}}
\end{table}

\begin{table}[!h]
\caption{Experiments over $5$ real-world networks with various benchmark sizes. Results are averaged under 20 independent trials.\label{table:Real_N_result}}
\centering
\scalebox{1}{
\begin{tabular}{lp{2cm}p{3cm}p{3cm}p{3cm}}
\toprule
P & Network & Benchmark nodes & TVRF (\%) & TSOS (\%) \\
\midrule
1& Dolphin & 2 ($3.2\%$) & $95.43(\pm 4.44)$& $\mathbf{96.29(\pm3.06)}$ \\
2& Dolphin & 6 ($9.7\%$) & $98.38(\pm 0.88)$& $98.38(\pm0.88)$ \\
3& Polbooks & 4 ($3.8\%$) & $81.33(\pm 0.41)$ & $\mathbf{82.86(\pm0.41)}$\\
4& Polbooks &15($14.3\%$) & $88.57(\pm 0.32)$ & $88.57(\pm0.32)$\\
5& Football &12 ($10.7\%$) & $91.17(\pm0.21)$ & $\mathbf{92.14(\pm0.62)}$ \\
6& COIL &45 (3\%)&$80.3(\pm 5.70)$ & $\mathbf{81.38(\pm 6.76)}$ \\
7& COIL &150 (10\%)&$91.7(\pm 2.70)$ & $\mathbf{92.6(\pm 1.93)}$ \\
8& MNIST &70 (0.1\%) & $32.16(\pm 7.82)$& $\mathbf{93.69(\pm3.39)}$\\
9& MNIST &140 (0.2\%) & $89.76(\pm 3.67)$& $\mathbf{97.29(\pm0.13)}$\\
\bottomrule
\end{tabular}}
\end{table}

Experiment results of $5$ real-world networks are illustrated in Tab. \ref{table:Real_N_result}. Similar as the other experiments, picking more benchmark nodes clearly improves network partition accuracy. In addition, the proposed TSOS method performs better for the cases with less initial benchmark nodes, while it can still obtain similar partition accuracy as TVRF for the cases with more initial benchmark nodes. Clearly, for a really small ratio of selected benchmark nodes to the total number of network nodes, e.g. MNIST, TSOS achieves much better partition accuracy than TVRF: $93.69\%$ by TSOS versus $32.16\%$ by TVRF (with $0.1\%$ nodes as benchmark), $97.29\%$ by TSOS versus $89.76\%$ by TVRF (with $0.2\%$ nodes as benchmark). This should thank to the introduced intermediate step of benchmark expansion with a proper confidence criterion.

\begin{figure*}[!h]
\centering
\subfigure[]{
  \includegraphics[width=0.48\textwidth]{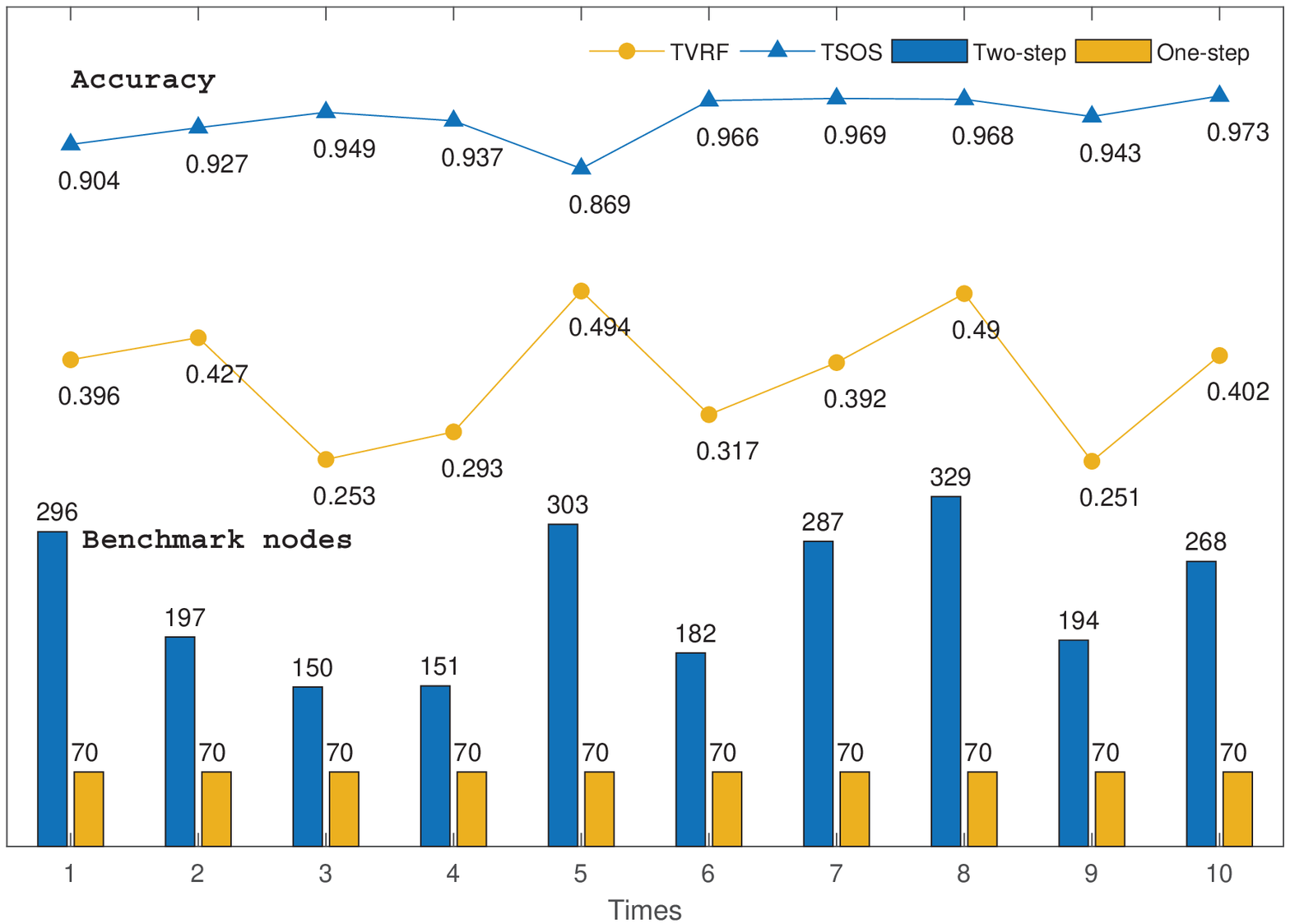}}
  \subfigure[]{\includegraphics[width=0.48\textwidth]{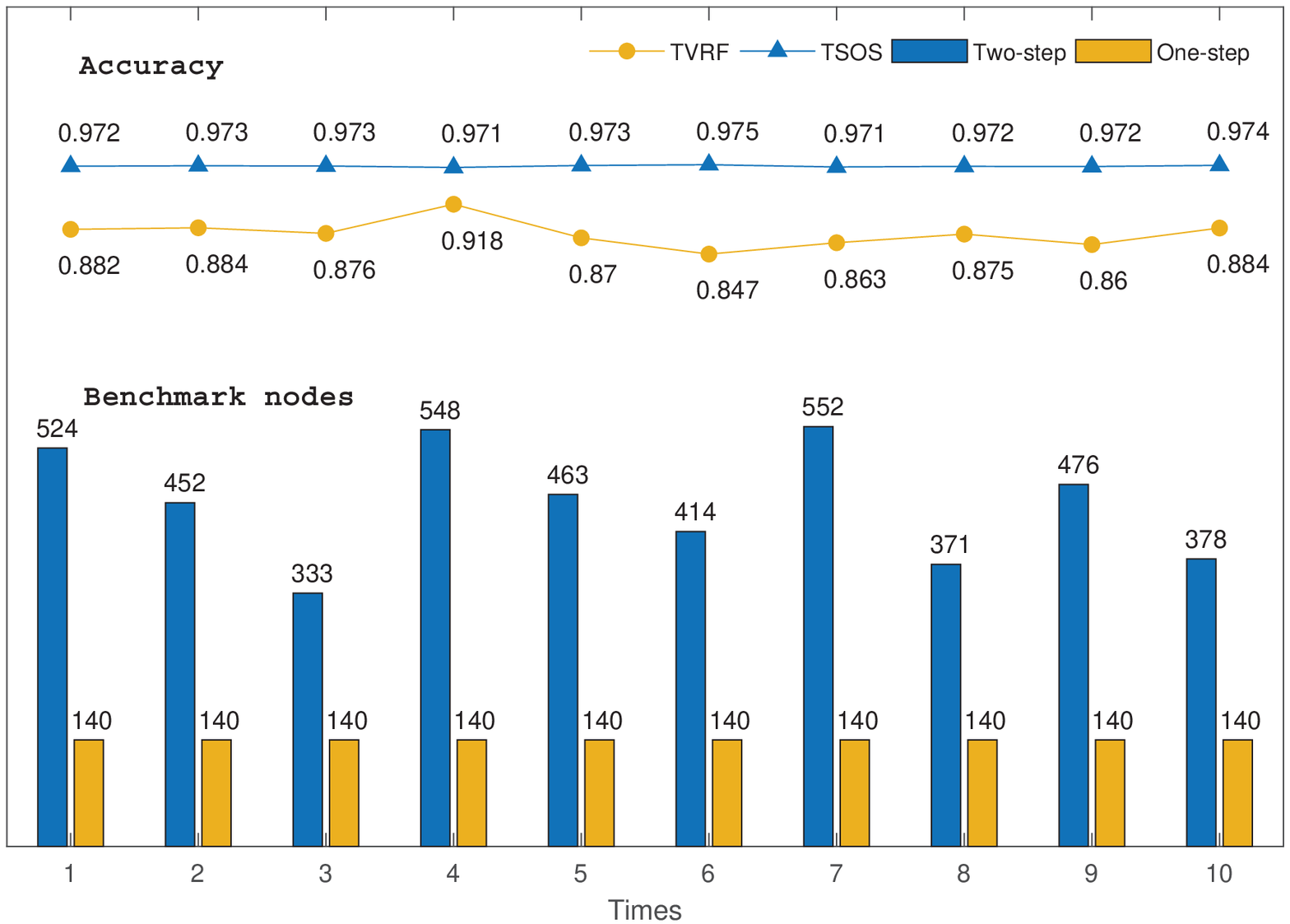}}
\caption{Experiment results on the MNIST dataset: (a). experiments repeat $10$ times, when $0.1\%$ nodes are chosen into benchmark set, TSOS (blue curve) performs more reliable comparing with TVRF (yellow curve), meanwhile, blue bars show the total numbers of benchmark nodes after expansion, which are much more the initial benchmark nodes (yellow bars). (b). also, experiments repeat $10$ times when $0.2\%$ nodes are chosen into benchmark set, TSOS (blue curve) still performs more reliable comparing with TVRF (yellow curve), meanwhile, blue bars show the total numbers of expanded benchmark nodes, which are much more the initial benchmark nodes (yellow bars). }
\label{figure:MIN2}
\end{figure*}

Particularly, for each network partition, we repeat experiments $20$ times with different initial conditions. In view of Tab. \ref{table:Real_N_result} an d Fig. \ref{figure:MIN2}, the computed results through the proposed TSOS often have less variance while keeping higher accuracy, hence better robustness in numerics. For example, clustering MINST data graph with only $0.2\%$ nodes as benchmark, the variance of $20$ experiment results by TSOS is only $0.13\%$, which is much less than the results' variance $3.67\%$ by TVRF. Detailed performance for each experiment setting can be found in Fig. \ref{figure:MIN2}. Moreover, the total numbers of benchmark nodes after expansion are much more the initial benchmark nodes, as blue bars vs. yellow bars shown in Fig. \ref{figure:MIN2}. Such computational robustness is often the seminal factor of partitioning large-scale networks, especially when only a small portion of nodes are available as benchmark.

\section{Conclusions and Future Studies}
We introduce a novel two-stage optimization strategy for partitioning network communities, which makes use of inherent network structure information, i.e. the new network centrality measure of both links and vertices, so as to construct the key affinity description of the given network for which the direct similarities between graph nodes or nodal features are not available to obtain the classical affinity matrix. Such calculated network centrality information presents an essential measure for detecting network communities, and also a `confidence' criterion for developing new benchmark nodes.
We also develop an efficient convex optimization algorithm under the new variational perspective of primal and dual to tackle the challenging combinatorial optimization problem of network partitioning. Experiment results demonstrate that the proposed optimization approach largely improves the accuracy of clustering communities from various networks.

It is obvious that obtaining a reasonable affinity matrix $(w_{ij})$ is the key factor for most graph or network partition algorithms. One way to improve
the effectiveness of the affinity matrix is to take into account the pairs of nodes that are not directly connected, for example, the affinity matrix $\tilde{W}$ through the principle of three degree influence \cite{Fowler2009Dynamic}:
\[
\tilde{W} \, = \, W \, + \, W^2 \, + \, \beta W^3
\]
where $\beta > 0$, or the more generalized affinity matrix $W^*$ given as:
      \[
        {W^ * } = \alpha W + \alpha {W^2} + \alpha {W^3} +  \cdots  = {(I - \alpha W)^{ - 1}} - I
      \]
where $\alpha$ is the attenuation constant which should be less than ${\lambda _{\max }}^{ - 1}(W)$ for convergence.

The computation of each $q^k(e_{ij})$, $r_{i}^k$ and $r_i^s$, for any $e_{ij} \in E$, $k=1 ... K$ and $i=1 ... n$, in the introduced ALM-based dual optimization algorithm (Alg. \ref{alg-2}) can be implemented edgewise and nodewise at the same time, which forms the basis to reimplement the algorithmic steps on modern parallel computing platforms like GPUs or HPCs, so as to significantly improve numerical efficiency and handle super large-scale network partition problems.

\section*{Acknowledgement}
This work was supported by the National Natural Science Foundation of China (Grant Nos. 61877046, 61877047 and 11801200), Shaanxi Provincial Natural Science Foundation of China (Grant No.2017JM1001), the Fundamental Research Funds for the Central Universities, and the Innovation Fund of Xidian University.

\appendices

\section*{Appendix}

\subsection{Equivalent Convex Optimization Models}
\label{sec:app1}

By simple convex analysis, we can equally express the absolute function $w\abs{u}$ as $\max_{q} q \cdot u$, subject to $\abs{q} \leq w$. In this sense, we have the following equivalent expression for each absolute function term of \eqref{eq:potts2}:
\bq \label{eq:app-1}
w_{ij}\abs{\psi _{ik} - \psi _{jk}} \, \Longleftrightarrow \, \max_{q^k(e_{ij})} \, q^k(e_{ij}) (\psi _{ik} - \psi _{jk}) \, , \quad \text{s.t.}\; \abs{q^k(e_{ij})} \, \leq \, w_{ij} \, .
\eq

We can also reformulate the energy term $M_{ik} \psi_{ik}$ of \eqref{eq:potts2} along with the constraint $\psi_{ik} \geq 0$ such that
\bq \label{eq:app-2}
M_{ik} \psi_{ik} \, , \; \psi_{ik} \, \geq\, 0 \, \Longleftrightarrow \, \max_{r_{i}^k}  \, r_i^k \psi_{ik} \, , \quad \text{s.t.} \; r_i^k \, \leq \, M_{ik}\, .
\eq
This is clear that for any $\psi_{ik} < 0$, the maximum of $r_i^k \psi_{ik}$ reaches infinity when $r_i^k$ tends to $- \infty$; for any $\psi_{ik} \geq 0$,
its maximum reaches $M_{ik}\psi_{ik}$ when $r_i^k = M_{ik}$.

In addition, the linear equality constraint of \eqref{eq:simp} can be identically rewritten as
\bq \label{eq:app-3}
\max_{r_i^s} \, r_i^s \big( 1 \, - \, \sum_{k=1}^K \, \psi_{ik} \big) \, ,
\eq
and each variable $r_i^s$ is free.

Observe the facts \eqref{eq:app-1}, \eqref{eq:app-2} and \eqref{eq:app-3}, it is easy to prove that the node-wise simplex constrained convex optimization problem
\eqref{eq:potts2} is mathematically equivalent to the following minimax formulation
\bq \label{eq:pd}
\min_{\psi} \max_{q, r}\;  \sum_{i=1}^n r_i^s \, + \, \sum_{k=1}^K \sum_{i=1}^n \psi_{ik}
\Big( \mathrm{div}(q^k)_i \, - \, r_i^s \, + \, r_i^k \Big) \, , \quad \text{s.t.} \;
\abs{q^k(e_{ij})} \, \leq \, w_{ij} \, , \;\;  r_i^k \, \leq \, M_{ik}\, .
\eq
where the divergence operator $\mathrm{div}(q^k)$ is given in \eqref{eq:div}. In this work,
we call the above optimization problem as the equivalent \emph{primal-dual model}.

While minimizing the primal-dual formulation \eqref{eq:pd} over all $\psi_{ik}$, we can easily obtain the following
maximization problem
\bq \label{eq:app-dual}
 \max_{q, r}\;  \sum_{i=1}^n r_i^s \, , \quad \text{s.t.} \; \;\;
 \mathrm{div}(q^k)_i \, - \, r_i^s \, + \, r_i^k \, = \, 0 \,, \;\;
\abs{q^k(e_{ij})} \, \leq \, w_{ij} \, , \;\;  r_i^k \, \leq \, M_{ik}\, .
\eq

Clearly, the optimization formulation \eqref{eq:app-dual} is also equivalent to the convex optimization problem \eqref{eq:potts2}, which is
named as the equivalent \emph{dual model} in this paper. We actually focus on the optimum $\psi_{ik}$, $i=1 ... n$ and $k=1 ... K$, to the optimization problem \eqref{eq:potts2}, which are the optimal multipliers to the linear equality constraints
\bq \label{eq:appd-lq}
\mathrm{div}(q^k)_i \, - \, r_i^s \, + \, r_i^k \, = \, 0 \, , \quad i\, =\, 1 ... n\, , \;\; k\, =\, 1 ... K\, ,
\eq
in the sense of optimizing its identical dual model \eqref{eq:app-dual}.

\subsection{Detailed Augmented Lagrangian Method Based Algorithm to \eqref{eq:potts2}}
\label{sec:app2}

Details of the proposed augmented Lagrangian method-based algorithm to the linear equality constrained convex optimization problem \eqref{eq:potts2} is listed in Alg. \ref{alg-1}.

\begin{algorithm}[H]
    \caption{Augmented Lagrangian Method Based Algorithm \label{alg-1}}
    \begin{algorithmic}[1]
    \State Choose the proper initial values $(\psi_{ik})^0$, $(q^k(e_{ij}))^0$, $(r_i^s)^0$ and $(r_i^k)^0$ and let $t=1$, start iterations till converged:
    \While{ "not  converged"}
    \State We first compute the residue $(R_i^k)^{t-1}$ at each node $v_i$ and $k$, where $i=1 ... n$ and $k=1 ... K$:
    \[
    (R_{i}^k)^{t-1} \, = \, \big(\mathrm{div}(q^k)_i \, - \, r_i^s \, + \, r_i^k\big)^{t-1} \, - \,  (\psi_{ik})^{t-1}/c \, .
    \]
    \State Fix the values of $(\psi_{ik})^{t-1}$, $(r_i^s)^{t-1}$ and $(r_i^k)^{t-1}$, compute $(q^k(e_{ij}))^{t}$ for each edge $e_{ij} \in E$ and $k=1 ... K$: 
    \[
    (q^k(e_{ij}))^{t} \, = \, \textbf{Projection}_{\abs{q^k(e_{ij})} \leq w_{ij}} \Big((q(e_{ij})^k)^{t-1} \, - \, s \nabla_{e_{ij}} (R^k)^{t-1} \Big)
    \]
    where the projection operator is to threshold the value within the bound $[- w_{ij}, w_{ij}]$, and $s>0$ is the chosen step-size for gradient descent.
    \State Fix the values of $(\psi_{ik})^{t-1}$, $(r_i^s)^{t-1}$ and $(q^k(e_{ij}))^{t}$,  compute $(r_i^k)^{t}$, $i=1 ... n$ and $k=1 ... K$:
    \[
    (r_i^k)^{t} \, = \, \textbf{Projection}_{r_i^k \leq M_{ik}}  \Big((\psi_{ik})^{t-1}/c \, +\, (r_i^s)^{t-1} \, - \, \mathrm{div}((q^k)^t)_i \Big)
    \]
    where the projection operator is to threshold the computation result below the given upper bound.
    \State Fix the values of $(\psi_{ik})^{t-1}$, $(r_i^k)^{t}$ and $(q^k(e_{ij}))^{t}$,  compute $(r_i^s)^{t}$, $i=1 ... n$, by maximizing $L_c(\psi, q, r)$ over each $r_i^s$, which results in
    \[
    (r_i^s)^t \, = \, \Big(1 \, + \, c \sum_{k=1}^K Q_i^k \Big) \, / \, ( c K) \, , \quad Q_i^k \, = \,  \big(\mathrm{div}(q^k)_i \, + \, r_i^k\big)^{t} \, - \,  (\psi_{ik})^{t-1}/c \, .
    \] 
    \State {\bf{update}} $(\psi)_{ik}^t$ as follows:
    \[
    (\psi_{ik})^t \, = \, (\psi_{ik})^{t-1} - c \big(\mathrm{div}(q^k)_i \, - \, r_i^s \, + \, r_i^k\big)^{t}\, , \quad i\, = \, 1 \dots n\, , \;\; k \, = \, 1 \ldots K\, .
    \]
    \State {\bf{update}} $t=t+1$
    \EndWhile
    \State $\mathbf{end while}$
    \State $\mathbf{return}$.
    \end{algorithmic}
\end{algorithm}

\ifCLASSOPTIONcaptionsoff
  \newpage
\fi



%
%
\bibliographystyle{IEEEtran}
\bibliography{Reference}

%

\begin{IEEEbiography}[{\includegraphics[width=1in,height=1.25in,clip,keepaspectratio]{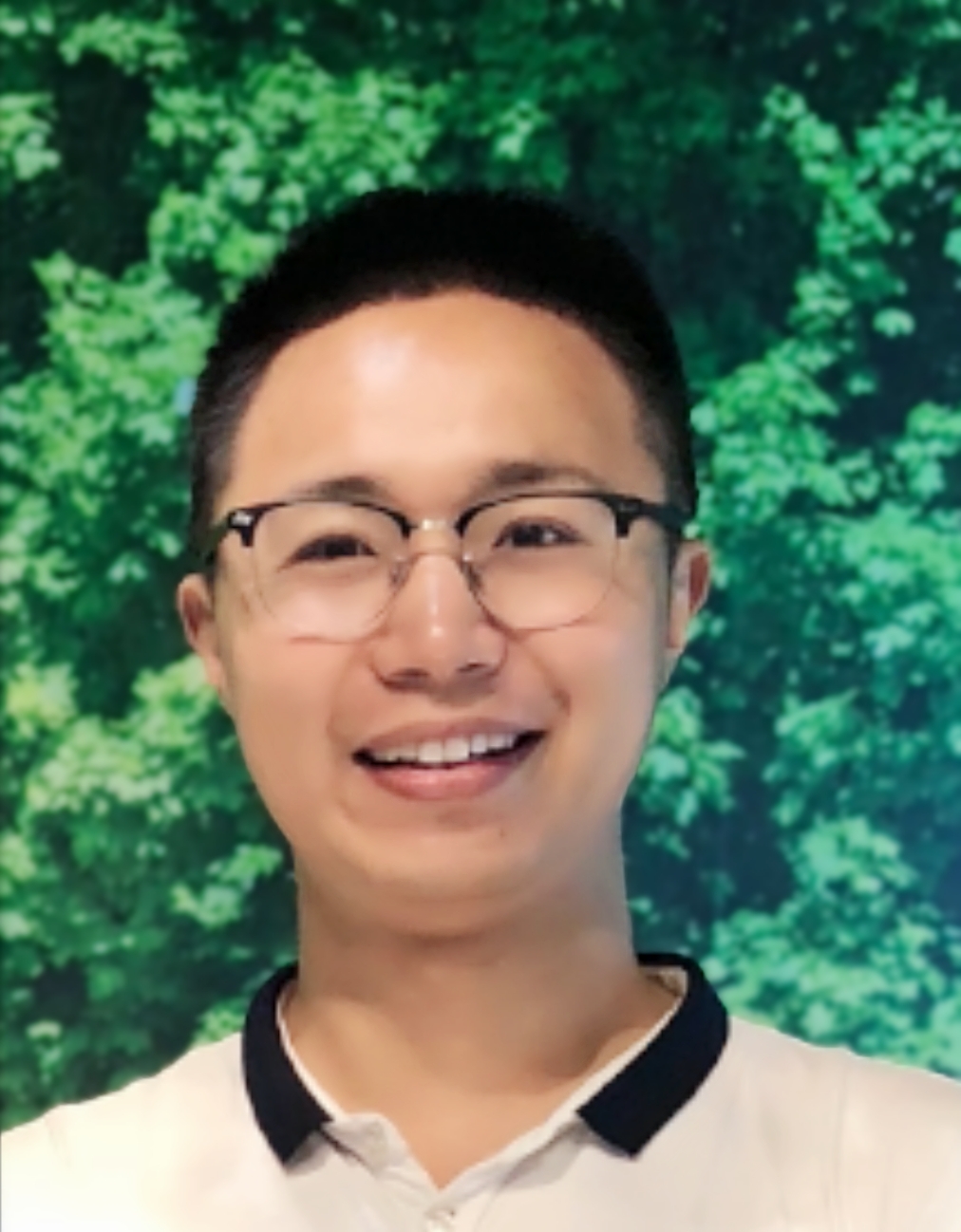}}]{Yiguang Bai}
received the B.S. degree in Mathematics and Applied Mathematics from Xidian University, Xi'an, China, in
2014 and the M.S. degree in Applied Mathematics from Xidian University, Xi'an, China, in 2017

Bai Yiguang is currently pursuing a PhD degree at the School of Mathematics and Statistics, Xidian University. His research interests include network structure, robustness and function.
\end{IEEEbiography}

\begin{IEEEbiography}[{\includegraphics[width=1in,height=1.25in,clip,keepaspectratio]{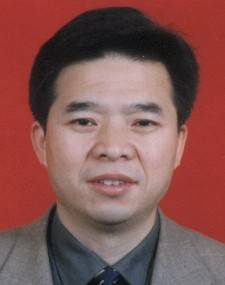}}]{Sanyang Liu}
received the M.S. degree in Applied Mathematics from Xidian
University, and the Ph.D. degree in Computational Mathematics from Xi'an Jiaotong
University, Xi'an in 1984 and 1989 respectively. After finishing the Ph.D. degree, he spent
one year at the University Paul Sabatier, Toulouse, France as a postdoctoral fellow.

He is currently the Director of the Institute of Industrial and Applied Mathematics, Director
of the Center for Mathematics and Interdisciplinary Research in Xidian University. His research interests
include optimization methods and their applications, nonlinear analysis,
system modeling, information network, etc.

Dr. Liu has been selected in the National Special Support Program for High-level Personnel Recruitment, China, 2016.
\end{IEEEbiography}

\begin{IEEEbiography}[{\includegraphics[width=1in,height=1.25in,clip,keepaspectratio]{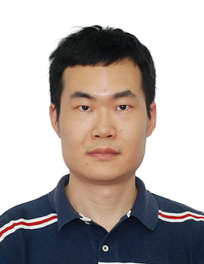}}]{Ke Yin}
is associate professor in Center for Mathematical Sciences at Huazhong University and Science and Technology. He has been an Adjunct Assistant Professor and postdoctoral fellow in Department of Mathematics at UCLA from 2013 to 2016. Before that He obtained BS from University of Science and Technology of China, and PhD from Georgia Institute of Technology in USA. Dr. Yin’s research interests include inverse problems, non-smooth optimization with applications in machine learning and image processing.
\end{IEEEbiography}

\begin{IEEEbiography}[{\includegraphics[width=1in,height=1.25in,clip,keepaspectratio]{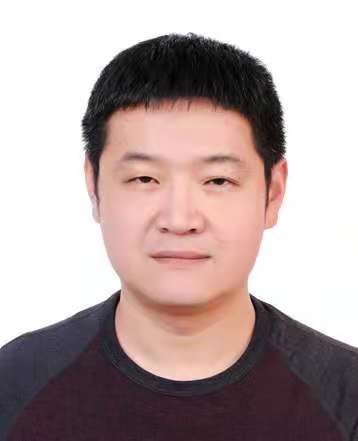}}]{Jing Yuan}
is working as the professor at the School of Mathematics and Statistics, Xidian University in Xi'an, China. Before that, he worked as the research scientist at Robarts Research Institute of Western University in Canada from 2011 to 2016. He obtained his PhD with excellence from the Department of Computer Science and Mathematics in Heidelberg University, Germany. His research interests are in developing convex optimization theories and algorithmic implementations, advanced variational analysis and high-performance distributed parallel computing, especially with applications to most challenging practices of computer vision, medical image analysis and machine learning. He published about 100 papers in the top international journals and conferences, and is serving as the committee member or reviewer of many top conferences and journals.
\end{IEEEbiography}

\end{document}